\title{Weighted HOM-Problem for Nonnegative Integers}
\titlerunning{Weighted HOM-Problem for Nonnegative Integers}
\author{Andreas Maletti}{Institute of Computer Science, Leipzig University, 04109 Leipzig, Germany}{andreas.maletti@uni-leipzig.de}{https://orcid.org/0000-0002-0814-598X}{}
\author{Andreea-Teodora N{\'{a}}sz}{Institute of Computer Science, Leipzig University, 04109 Leipzig, Germany}{nasz@informatik.uni-leipzig.de}{https://orcid.org/0000-0002-0814-598X}{Research financially supported by a scholarship awarded to T.~Nasz by the Free State of Saxony.}{}
\author{Erik Paul}{Institute of Computer Science, Leipzig University, 04109 Leipzig, Germany}{epaul@informatik.uni-leipzig.de}{https://orcid.org/0000-0002-0814-598X}{}
\authorrunning{A. Maletti, T. Nasz and E. Paul} 
\keywords{Weighted Tree Automaton, Decision Problem, Subtree Equality
  Constraint, Tree Homomorphism, HOM-Problem, Weighted Tree Grammar,
  Weighted HOM-Problem}
\useunder{\uline}{\ul}{}
\DeclareMathOperator{\wt}{wt}
\DeclareMathOperator{\pos}{pos}
\DeclareMathOperator{\var}{var}
\DeclareMathOperator{\rk}{rk}
\DeclareMathOperator{\he}{ht}
\DeclareMathOperator{\size}{size}
\providecommand*{\nat}[0]{\ensuremath{\mathbb N}}
\providecommand*{\seq}[3]{\ensuremath{#1_{#2}, \dotsc, #1_{#3}}}
\providecommand*{\abs}[1]{\ensuremath{\lvert #1 \rvert}}
\providecommand*{\word}[3]{\ensuremath{#1_{#2} \dotsm #1_{#3}}}
\newcommand{\nve}[1]{[#1]}
\newcommand{\nze}[2]{[#1,#2]}
\newcommand{\fixedn}{N}%
\newcommand{\Tsigma}{T_{\Sigma}}
\providecommand{\sem}[1]{\ensuremath{\llbracket #1 \rrbracket}}
\newcommand{\N}{\mathbb{N}}
\begin{document}

\maketitle

\begin{abstract}
  The HOM-problem asks whether the image of a regular tree language
  under a given tree homomorphism is again regular.  It was recently
  shown to be decidable by \textsc{Godoy}, \textsc{Gim{\'e}nez},
  \textsc{Ramos}, and \textsc{\`Alvarez}.  In this paper, the
  $\N$-weighted version of this problem is considered and its
  decidability is proved.  More precisely, it is decidable in
  polynomial time whether the image of a regular $\N$"~weighted tree
  language under a nondeleting, nonerasing tree homomorphism is
  regular.
\end{abstract}

\section{Introduction}
The prominent model of nondeterministic finite-state string automata 
has seen a variety of extensions in the past few decades.  Notably, their
qualitative evaluation was generalized to a quantitative one by means
of weighted automata in~\cite{schutzenberger1961}.  Those automata
have been extensively studied~\cite{salomaa2012automata}, not least
because of their ability to neatly represent process factors such as
costs, consumption of resources or time, and probabilities related to
the processed input.  Semirings~\cite{gol99,hebwei98} present
themselves as a well suited algebraic structure for the evaluation of
the weights because of their generality as well as their reasonable
computational efficiency that is derived from distributivity. 

Parallel to this development, finite-state automata have been generalized 
to process other forms of inputs such as infinite words~\cite{infinitewords} 
and trees~\cite{tataok}.  Finite-state tree automata and the \emph{regular 
tree languages} they generate have been widely researched since their
introduction in
\cite{doner1970tree,thatcher1965generalized,thatcher1968generalized}.  
These models have proven to be useful in a variety of application
areas including natural language processing~\cite{jurmar08}, image
generation~\cite{drewes2006grammatical}, and compiler
construction~\cite{wilseihac13}.  In many cases, applications require
the integration of both the quantitative evaluation and trees as a more
expressive input structure, which led to the development of several
weighted tree automaton~(WTA) models.  An extensive overview can be
found in~\cite[Chapter~9]{fulvog09}.

Finite-state tree automata have several serious limitations including
their inability to ensure the equality of two subtrees of any size in
an accepted tree.  These restrictions are well-known~\cite{gecste15},
and the mentioned drawback was addressed in~\cite{rateg1981}, where an
extension was proposed that is capable of explicitly requiring certain
subtrees to be equal or different.  These models are highly 
convenient in the study of tree transformations~\cite{fulvog09}, which
can implement subtree duplication, and they are also the primary tool
used in the seminal paper~\cite{godoy2013hom}, where the decidability
of the HOM-problem was established.

The HOM-problem, a previously long-standing open question in the study
of tree languages, asks whether the image of a regular tree language
under a given tree homomorphism is also regular.  The image need not
be regular since tree homomorphisms can generate copies of subtrees.
Indeed, if this copying ability is removed from the tree homomorphism
(e.g., linear tree homomorphisms), then the image is always
regular~\cite{gecste15}.  The classical (\textsc{Boolean}) HOM-problem
was  recently solved in~\cite{godoy2013hom,godoy2010hom}, where the
image is represented by a tree automaton with constraints, for which
it  is then determined whether it generates a regular tree
language. The problem was later shown to be
EXPTIME-complete~\cite{hom2012exp}.

In the weighted case, decidability of the HOM-problem remains open.
Previous research on the preservation of regularity in the weighted
setting~\cite{bozrah05,esikui03,fulmalvog10,fulmalvog10b} focuses on
cases that explicitly exclude the copying power of the homomorphism.
In the present work, we prove that the HOM-problem for regular
$\N$"~weighted tree languages can easily be decided in polynomial
time.  Our proof outline is inspired by~\cite{godoy2013hom}: Consider
a regular $\N$"~weighted tree language and a nondeleting, nonerasing
tree homomorphism. First, we represent this image efficiently using an
extension (WTGh) of weighted tree automata \cite{WTAc-journal}. The
question is now regularity of this WTGh, and the idea behind our
contribution is the reduction of its (non)regularity to something more
tangible: the large duplication property.
In turn, we prove decidability in polynomial time of the large
duplication property directly in Lemma~\ref{lm: ldpp decid}.  If the
WTGh for the homomorphic image does not have this property, then we
give an effective construction of an equivalent $\N$"~weighted WTG
(albeit in exponential time), thus proving its regularity.  Otherwise,
we use a pumping lemma presented in~\cite{WTAc-journal} and isolate a
strictly nonregular part from the WTGh.  The most challenging part of
our proof and our main technical contribution is showing that the
remaining part of the homomorphic image cannot compensate for this
nonregular behavior.  For this, we employ \textsc{Ramsey}'s
theorem~\cite{ramsey1930} to identify a witness for the nonregularity
of the whole weighted tree language.

\section{Preliminaries}
We denote the set of nonnegative integers by~$\N$.  For~$i,j \in \N$
we let~$\nze{i}{j} = \{k \in \N \mid i \leq k \leq j\}$ and~$\nve{j} =
\nze{1}{j}$.  Let~$Z$ be an arbitrary set.  The cardinality of~$Z$ is
denoted by~$\abs Z$, and the set of words over~$Z$ (i.e., the set of
ordered finite sequences of elements of~$Z$) is denoted by~$Z^*$.

\subsection*{Trees, Substitutions, and Contexts}
A \emph{ranked alphabet}~$(\Sigma, \mathord{\rk})$ consists of a
finite set~$\Sigma$ and a mapping~$\mathord{\rk} \colon \Sigma \to \N$
that assigns a rank to each symbol of~$\Sigma$.  If there is no risk 
of confusion, then we denote the ranked alphabet~$(\Sigma,
\mathord{\rk})$ by~$\Sigma$ alone.  We write~$\sigma^{(k)}$ to
indicate that~$\rk(\sigma) = k$.  Moreover, for every~$k \in \N$ we
let~$\Sigma_k = \rk^{-1}(k)$ and~$\rk(\Sigma) = \max\,\{k \in \N \mid
\Sigma_k \neq \emptyset\}$ be the maximal rank of symbols of~$\Sigma$.
Let $X = \{x_i \mid i \in \N\}$ be a countable set of (formal)
variables.  For every~$n\in\N$, we let~$X_n = \{x_i \mid i\in
\nve{n}\}$.  Given a ranked alphabet~$\Sigma$ and a set~$Z$, the
set~$\Tsigma(Z)$ of \emph{$\Sigma$"~trees indexed  by~$Z$} is the
smallest set such that~$Z \subseteq \Tsigma(Z)$ and $\sigma(\seq t1k)
\in \Tsigma(Z)$ for every~$k \in \N$, $\sigma \in\Sigma_k$, and~$\seq
t1k \in \Tsigma(Z)$.  We abbreviate~$\Tsigma(\emptyset)$ simply
by~$\Tsigma$, and any subset~$L\subseteq \Tsigma$ is called a
\emph{tree language}.

Let $\Sigma$~be a ranked alphabet, $Z$~a set, and~$t \in\Tsigma(Z)$.
The set~$\pos(t)$ of \emph{positions of~$t$} is defined
by~$\pos(z) = \{\varepsilon\}$ for all~$z \in Z$ and
by~$\pos(\sigma(\seq t1k)) = \{\varepsilon\} \cup \{iw \mid i \in [k],
w \in \pos(t_i) \}$ for all~$k\in\N$, $\sigma \in \Sigma_k$,
and~$\seq t1k \in \Tsigma(Z)$.  With their help, we define the
\emph{size}~`$\size(t)$' and \emph{height}~`$\he(t)$' of~$t$
as~$\size(t) = \abs{\pos(t)}$
and~$\he(t) = \max_{w \in \pos(t)} \abs w$.  Positions are partially
ordered by the standard prefix order~$\leq$ on~$[\rk(\Sigma)]^*$, and
they are totally ordered by the ascending lexicographic
order~$\preceq$ on~$[\rk(\Sigma)]^*$, in which prefixes are larger;
i.e., $\varepsilon$~is the largest element.  More precisely,
for~$v, w \in \pos(t)$ if there exists~$u \in [\rk(\Sigma)]^*$
with~$vu = w$, then we write~$v\leq w$, call~$v$ a \emph{prefix}
of~$w$, and let~$v^{-1}w = u$ because $u$ is uniquely determined if it
exists.  Provided that~$u = \word u1n$
with~$\seq u1n \in [\rk(\Sigma)]$ we also define the
\emph{path~$[v,\dotsc,w]$ from~$v$ to~$w$} as the
sequence~$(v, vu_1, vu_1u_2, \dotsc, w)$ of positions.  Any two
positions that are~$\leq$"~incomparable are called \emph{parallel}.

Given~$t, t' \in \Tsigma(Z)$ and~$w\in \pos(t)$, the
\emph{label}~$t(w)$ of~$t$ at~$w$, the \emph{subtree}~$t|_w$ of~$t$
at~$w$, and the \emph{substitution}~$t[t']_w$ of~$t'$ into~$t$ at~$w$
are defined by~$z(\varepsilon) = z|_\varepsilon = z$
and~$z[t']_\varepsilon = t'$ for all~$z \in Z$ and by~$t(\varepsilon)
= \sigma$, $t(iw') = t_i(w')$, $t|_\varepsilon = t$, $t|_{iw'} =
t_i|_{w'}$, $t[t']_\varepsilon = t'$, and $t[t']_{iw'} = \sigma
\bigl(\seq t1{i-1}, t_i[t']_{w'}, \seq t{i+1}k \bigr)$ for all
trees~$t = \sigma(\seq t1k)$ with~$k \in \N$, $\sigma \in \Sigma_k$,
$\seq t1k \in\Tsigma(Z)$, all~$i \in [k]$, and all~$w' \in \pos(t_i)$.
For all sets~$S \subseteq \Sigma \cup Z$ of symbols, we let~$\pos_S(t)
= \{w \in \pos(t) \mid t(w) \in S\}$, and we write~$\pos_s(t)$ instead
of~$\pos_{\{s\}}(t)$ for every~$s \in \Sigma \cup Z$.  The set of
variables occuring in~$t$ is~$\var(t) = \{x \in X \mid \pos_x(t) \neq
\emptyset\}$.  Finally, consider~$n \in \N$ and a 
mapping~$\theta' \colon X_n \to \Tsigma(Z)$.  Then by substitution, 
$\theta'$~induces a mapping~$\theta \colon \Tsigma(Z) \to \Tsigma(Z)$
defined by~$\theta(x) = \theta'(x)$ for every~$x \in X_n$, $\theta(z)
= z$ for every~$z \in Z \setminus X_n$, and~$\theta(\sigma(\seq t1k))
= \sigma(\theta(t_1), \dotsc, \theta(t_k))$ for all~$k \in \N$,
$\sigma \in \Sigma_k$, and~$\seq t1k \in \Tsigma(Z)$.  For~$t \in
\Tsigma(Z)$, we denote~$\theta(t)$ by~$t\theta$ or, more commonly, by
$t[x_1 \gets \theta'(x_1), \dotsc, x_n \gets \theta'(x_n)]$. 

Let~$\Box\notin\Sigma$.  A \emph{context} is a tree~$C \in
\Tsigma(\Box)$ with~$\pos_\Box(C) \neq \emptyset$.  More specifically,
we call~$C$ an~\emph{$n$"~context} if~$n=\abs{\pos_\Box(C)}$.  For an
$n$"~context~$C$ and~$\seq t1n \in \Tsigma$, we define the
substitution~$C[\seq t1n]$ as follows.  Let~$\pos_\Box(C) = \{\seq
w1n\}$ be the occurrences of~$\Box$ in~$C$ in lexicographic 
order~$w_1 \prec \dotsb \prec w_n$.  Then we let~$C[\seq t1n] = 
C[t_1]_{w_1} \dotsm [t_n]_{w_n}$.

\subsection*{Tree Homomorphisms and Weighted Tree Grammars}
Given ranked alphabets $\Sigma$~and~$\Gamma$, let~$h' \colon
\Sigma \to T_\Gamma(X)$ be a mapping with $h'(\sigma) \in
T_\Gamma(X_k)$ for all~$k \in \N$ and~$\sigma \in \Sigma_k$.  We
extend~$h'$ to~$h \colon T_\Sigma \to T_\Gamma$ by $h(\alpha) =
h'(\alpha) \in T_\Gamma(X_0) = T_\Gamma$ for all~$\alpha \in \Sigma_0$
and $h(\sigma(\seq t1k)) = h'(\sigma)[x_1 \gets h(t_1), \dotsc, x_k
\gets h(t_k)]$ for all~$k \in \N$, $\sigma \in \Sigma_k$, and~$\seq
t1k \in T_\Sigma$.  The mapping~$h$ is called the \emph{tree
homomorphism induced by~$h'$}, and we identify~$h'$ and its induced 
tree homomorphism~$h$. For the complexity analysis of our decision 
procedure, we define the size of~$h$ as~$\size(h) = \sum_{\sigma \in
  \Sigma} \abs{\pos(h(\sigma))}$. We call~$h$ \emph{nonerasing}  
(respectively, \emph{nondeleting}) if~$h'(\sigma)\notin X$ (respectively,
$\var(h'(\sigma)) = X_k$) for all~$k\in\N$ and~$\sigma \in \Sigma_k$.
In this contribution, we will only consider nonerasing and nondeleting
tree homomorphisms~$h \colon T_\Sigma \to T_\Gamma$, which are
therefore \emph{input finitary}; i.e., the preimage~$h^{-1}(u)$ is finite 
for every~$u \in T_\Gamma$ since~$\abs t \leq \abs u$ for every~$t 
\in h^{-1}(u)$.  Any mapping~$A \colon \Tsigma \to \N$ is called
\emph{$\N$"~weighted tree language}, and we define the weighted tree
language~$h_A \colon T_\Gamma \to \N$ for every~$u \in
T_\Gamma$ by $h_A(u) = \sum_{t \in h^{-1}(u)} A(t)$ and call it the
\emph{image of~$A$ under~$h$}.  This definition relies on the tree
homomorphism to be input-finitary; otherwise the defining sum is  
not finite, so the value~$h_A(u)$ is not necessarily well-defined.

A \emph{weighted tree grammar with equality
  constraints}~(WTGc)~\cite{WTAc-journal} is a 
tuple~$(Q, \Sigma, F, P, \mathord{\wt})$, in which $Q$~is a finite
set of \emph{states}, $\Sigma$~is a ranked alphabet of \emph{input
  symbols}, $F \colon Q \to \N$ assigns a \emph{final weight} to every
state, $P$~is a finite set of \emph{productions} of the form~$(\ell,
q, E)$ with $\ell \in T_\Sigma(Q) \setminus Q$, $q \in Q$, and
finite subset~$E \subseteq \nat^* \times \nat^*$, and
$\mathord{\wt} \colon P \to \N$~assigns a \emph{weight} to every
production.  A production~$p = (\ell, q, E) \in P$ is usually
written~$p = \ell \stackrel E\longrightarrow q$
or~$p = \ell \stackrel E\longrightarrow_{\wt(p)} q$, and the
tree~$\ell$ is called its \emph{left-hand side}, $q$~is its
\emph{target state}, and~$E$ are its \emph{equality constraints},
respectively.  Equality constraints~$(v, v') \in E$ are also
written as~$v = v'$.  A state~$q \in Q$ is \emph{final} if~$F(q) \neq
0$.

Next, we recall the \emph{derivation semantics} of WTGc
from~\cite{WTAc-journal}.  Let~$(v, v') \in \N^* \times \N^*$ be an
equality constraint and~$t \in \Tsigma$.  The tree~$t$
satisfies~$(v,v')$ if and only if~$v, v' \in \pos(t)$
and~$t|_v = t|_{v'}$, and for a finite
set~$C \subseteq \N^* \times \N^*$ of equality constraints, we
write~$t\models C$ if~$t$ satisfies all~$(v,v') \in C$.  Let
$G = (Q, \Sigma, F, P, \mathord{\wt})$ be a WTGc.  A \emph{sentential
  form (for~$G$)} is a tree~$\xi \in T_\Sigma(Q)$.  Given an input
tree~$t \in T_\Sigma$, sentential forms~$\xi, \zeta \in T_\Sigma(Q)$,
a production~$p = \ell \stackrel E\longrightarrow q \in P$, and a
position~$w \in \pos(\xi)$, we
write~$\xi \Rightarrow_{G,t}^{p,w} \zeta$ if~$\xi|_w = \ell$,
$\zeta = \xi[q]_w$, and $t|_w \models E$; i.e., the equality
constraints~$E$ are fulfilled on~$t|_w$.  A
sequence~$d = (p_1, w_1) \dotsm (p_n, w_n) \in (P \times \nat^*)^*$ is
a \emph{derivation (of~$G$) for~$t$} if there
exist~$\seq \xi0n \in T_\Sigma(Q)$ such that~$\xi_0 = t$ and
$\xi_{i-1} \Rightarrow_{G, t}^{p_i,w_i} \xi_i$ for all~$i \in [n]$.
We call $d$ \emph{left-most} if
additionally~$w_1 \prec w_2 \prec \dotsb \prec w_n$.  Note that the
sentential forms~$\seq \xi0n$ are uniquely determined if they exist,
and for any derivation~$d$ for~$t$ there exists a unique permutation
of~$d$ that is a left-most derivation for~$t$.  We call~$d$
\emph{complete} if~$\xi_n \in Q$, and in this case we also call it a
derivation \emph{to}~$\xi_n$.  The set of all complete left-most
derivations for~$t$ to~$q \in Q$ is denoted by~$D^q_G(t)$. A complete
derivation to some final state is called \emph{accepting}. If for every~$p \in
P$, there exists a tree~$t \in T_\Sigma$, a final state~$q$ and a 
derivation~$d = (p_1, w_1) \dotsm (p_m, w_m) \in D_G^q(t)$ such 
that~$F(q) \cdot \wt_{G}(d) \neq 0$ and $p \in \{\seq p1m\}$; i.e.\@
if every production is used in some accepting derivation, then~$G$
is~\emph{trim}.

Let~$d = (p_1, w_1) \dotsm (p_n, w_n)\in D_G^q(t)$ for
some~$t \in \Tsigma$ and~$i \in [n]$.  Moreso, let~$\{\seq j1\ell\}$ 
be the set~$\{j \in [n]\mid w_i \leq w_j\} $ with the indices~$j_1 < 
\dotsm < j_\ell$ of those positions of which $w_i$~is a prefix.  We refer
to~$(p_{j_1},w_i^{-1} w_{j_1} ), \dotsc,(p_{j_\ell},
w_i^{-1} w_{j_\ell} ) $ as the~\emph{derivation for~$t|_{w_i}$
incorporated in~$d$}. Conversely, for~$w \in \N^*$ we abbreviate the
derivation~$(p_1,ww_1) \dotsm (p_n,ww_n)$ by~$wd$.

The \emph{weight} of a derivation~$d = (p_1,w_1) \dotsm (p_n, w_n)$
is defined as~$\wt_G(d) = \prod_{i = 1}^n \wt(p_i)$.  The
weighted tree language generated by~$G$, written~$\sem G \colon
\Tsigma \to \N$, is defined for all~$t \in \Tsigma$~by
\[ \sem G(t) = \sum_{q \in Q,\, d \in D^q_G(t)} F(q) \cdot \wt_G(d)
\enspace. \] 
For~$t \in \Tsigma$ and~$q \in Q$, we will often use the 
value~$\wt_G^q(t)$ defined as~$\wt_G^q(t) = \sum_{d\in D_G^q(t)}
\wt_G(d)$.  Using distributivity, $\sem G(t)$ then simplifies
to~$\sem G(t) = \sum_{q\in Q} F(q) \cdot \wt_G^q(t)$. 
We call two WTGc \emph{equivalent} if they generate the same weighted
tree language. 

We call a WTGc~$(Q, \Sigma, F, P, \mathord{\wt})$ a \emph{weighted
  tree grammar}~(WTG) if~$E = \emptyset$ for every
production~$\ell \stackrel E\longrightarrow q \in P$; i.e., no
production utilizes equality constraints.  Instead of~$\ell
\stackrel \emptyset\longrightarrow q$ we also simply
write~$\ell\to q$.  Moreover, we call a WTGc a \emph{weighted tree
  automaton with equality constraints} (WTAc) if~$\pos_\Sigma(\ell) =
\{\varepsilon\}$ for every production~$\ell \stackrel E\longrightarrow
q \in P$, and a \emph{weighted tree automaton} (WTA) if it is both a
WTG and a WTAc. 
The classes of WTGc and WTAc are equally 
expressive, and they are strictly more expressive than the class of WTA 
\cite{WTAc-journal}. We call a weighted tree language \emph{regular} if 
it is generated by a WTA and \emph{constraint-regular} if it is generated
by a WTGc.  Productions with weight~$0$ are obviously useless, 
so we may assume that~$\wt(p) \neq 0$ for every production~$p$. Finally, 
we define the size of a WTGc as follows.

\begin{definition}
  Let~$G = (Q, \Sigma, F, P, \mathord{\wt})$ be a WTGc and~$p = \ell 
  \stackrel E\longrightarrow q\in P$ be a production.  We define 
  the~\emph{height of~$p$} as~$\he(p) = \he(\ell)$ and 
  its~\emph{size} as~$\size(p) = \abs{\pos(\ell)}$, the~\emph{height
    of~$P$} as~$\he(P) = \max_{p\in P} \he(p)$ and its~\emph{size}
  as~$\size(P) = \sum_{p\in P} \size(p)$, and finally the~\emph{height
    of~$G$} as~$\he(G) = \abs Q \cdot \he(P)$ and its~\emph{size}
  as~$\size(G) = \abs Q + \size(P)$.  
\end{definition}

It is known~\cite{WTAc-journal} that WTGc can be used to represent
homomorphic images of regular weighted tree languages.  Let $A \colon
T_\Sigma \to \N$ be a regular weighted tree language
(effectively given by a WTA) and~$h \colon T_\Sigma \to T_\Gamma$ be a
tree homomorphism.  Following~\cite[Theorem~5]{WTAc-journal} we 
can construct a WTGc~$G = (Q, \Gamma, F,P, \mathord{\wt})$ of a
specific shape such that~$\sem G = h_A$.  More precisely, the constructed 
WTGc~$G$ has a designated nonfinal~\emph{sink state}~$\bot \in Q$ such 
that~$F(\bot) = 0$ as well as $p_\gamma = \gamma(\bot, \dotsc,
\bot) \to \bot \in P$ and~$\wt(p_\gamma) = 1$ for every~$\gamma \in
\Gamma$.  In addition, every production~$p = \ell
\stackrel E\longrightarrow q \in P$ satisfies the following two
properties.  First,~$E \subseteq \pos_Q(\ell)^2$;
i.e., all equality constraints point to the $Q$"~labeled positions of
its left-hand side.  Without loss of generality, we can assume that the
set~$E$ of equality constraints is reflexive, symmetric, and
transitive; i.e., an equivalence relation on a subset~$D \subseteq
\pos_Q(\ell)$, so not all occurrences of states need to be
constrained.  Second, $\ell(v) = \bot$ and~$\ell(w) \neq \bot$ for
every~$v \in [w']_E \setminus \{w\}$ and~$w' \in D$, where~$w =
\min_{\preceq} [w']_E$; i.e., all but the lexicographically least
position in each equivalence class of~$E$ are guarded by state~$\bot$.
Essentially, the WTGc~$G$ performs its checks (and charges weights)
exclusively on the lexicographically least occurrences of
equality-constrained subtrees.  All the other subtrees, which by means
of the constraint are forced to coincide with another subtree, are
simply ignored by the WTGc, which formally means that they are
processed in the designated sink state~$\bot$. In the following, we will 
use~$\bot$ to indicate such a sink state, and 
write~$Q \cup \{\bot\}$ to explicitly indicate its presence.  

In~\cite{WTAc-journal} WTGc of the special shape just discussed were 
called eq"~restricted, but since these will be the primary objects of interest 
in this work, we simply call them WTGh here. The constructive 
proof of the following statement can be found in the appendix.

\begin{restatable}[see~\protect{\cite[Theorem~5]{WTAc-journal}}]{theorem}{thmhom}
  \label{thm:hom}
  Let $G = (Q, \Sigma, F, P, \mathord{\wt})$ be a trim WTA and $h
  \colon T_\Sigma \to T_\Gamma$ be a nondeleting and nonerasing tree
  homomorphism.  Then there exists a trim WTGh~$G'$
  with~$\sem{G'}=h_{\sem G}$.  Moreover, $\size(G') \in \mathcal{O}
  \bigl(\size(G) \cdot \size(h) \bigr)$ and $\he(G') \in \mathcal{O}
  \bigl(\size(h) \bigr)$.
\end{restatable}

\begin{example}
  \label{ex: WTGh}
  Let~$G=(Q \cup \{\bot\}, \Gamma, F, P, \mathord{\wt})$
  with~$Q = \{q, q_f\}$, $\Gamma = \{\alpha^{(0)}, \gamma^{(1)}, \delta^{(3)}\}$, $F(q) = F(\bot) = 0$ and~$F(q_f) = 1$, and the
  following set~$P$ of productions.
  \[ \Bigl\{\alpha \to_1 q,\; \gamma(q) \to_2 q,\; \delta \bigl(q,
    \gamma(\bot), q \bigr) \stackrel{1=21}\longrightarrow_1 q_f,\quad
    \alpha \to_1 \bot,\; \gamma(\bot) \to_1 \bot,\; \delta(\bot, \bot,
    \bot) \to_1 \bot \Bigr\} \] The WTGc~$G$ is a WTGh.  It generates
  the homomorphic image~$\sem G = h_A$ for the tree homomorphism~$h$
  induced by the mapping~$\alpha \mapsto \alpha$,
  $\gamma \mapsto \gamma(x_1)$, and
  $\sigma \mapsto \delta \bigl(x_2, \gamma(x_2), x_1 \bigr)$ applied
  to the regular weighted tree language~$A \colon T_\Sigma \to \N$
  given by~$A(t) = 2^{\abs{\pos_\gamma(t)}}$ for
  every~$t \in T_\Sigma$
  with~$\Sigma = \{\alpha^{(0)}, \gamma^{(1)}, \sigma^{(2)}\}$. The
  weighted tree language~$\sem G$ is itself not regular because its
  support is clearly not a regular tree language.
\end{example}

The restrictions in the definition of a WTGh allow us to trim it effectively 
using a simple reachability algorithm. For more details, we refer the 
reader to the appendix.

\begin{restatable}{lemma}{lmtrim}
  \label{lm:trim}
  Let~$G = (Q \cup \{\bot\}, \Sigma, F, P, \mathord{\wt})$ be a WTGh.
  An equivalent, trim WTGh~$G'$ can be constructed in polynomial
  time.
\end{restatable}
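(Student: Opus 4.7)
The plan is a standard bottom-up/top-down reachability analysis adapted to the WTGh setting. I proceed in three stages: compute the \emph{productive} states (those~$q$ for which $\wt_G^q(t) \neq 0$ for some $t \in \Tsigma$), compute the \emph{useful} states (those appearing in some accepting derivation of nonzero weight), and finally delete every state and every production that is not both productive and useful, while preserving the sink structure of~$G$.

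\textbf{Productivity.} The sink~$\bot$ is trivially productive: provided $\Sigma_0 \neq \emptyset$, it derives every tree in~$\Tsigma$ with weight~$1$ via the productions $\gamma(\bot, \dotsc, \bot) \to_1 \bot$. For any non-sink state~$q$, I add~$q$ to the productive set as soon as some production $\ell \stackrel E\longrightarrow q \in P$ has every non-$\bot$ $Q$"~labeled position of~$\ell$ filled by a productive state. The simplification afforded by the WTGh shape is that~$E$ is an equivalence on a subset of~$\pos_Q(\ell)$, with the lexicographically least position of each class labeled by some $p \in Q$ and every other class member labeled by~$\bot$; since~$\bot$ derives every tree, the constraint~$v = v'$ is satisfiable as soon as~$p$ is productive, and no extra test on constraints is needed.

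\textbf{Usefulness.} After restricting to productive states I propagate usefulness backwards starting from the final states~$\{q \mid F(q) \neq 0\}$: a state~$q$ becomes useful if either $F(q) \neq 0$, or some production $\ell \stackrel E\longrightarrow q' \in P$ whose target~$q'$ is already useful has~$q$ labeling one of the non-$\bot$ positions of~$\ell$. Only productions that survived the productivity pass are considered, so that usefulness really guarantees participation in some accepting derivation.

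\textbf{Restriction and equivalence.} I delete every state that is not both productive and useful, every production whose target or one of whose non-$\bot$ left-hand side positions carries a deleted state, and I restrict~$\Sigma$ and the~$\bot$ productions to those symbols still occurring on the left-hand side of a surviving non-$\bot$ production. This preserves the WTGh normal form (same sink $\bot$, same constraint shape), and $\sem{G'} = \sem G$ because every deleted production contributes zero to every accepting sum of~$\sem G$. Both fixed-point iterations inspect every production at most~$\abs Q$ times with constant-size per-production checks, so the procedure runs in time polynomial in~$\size(G)$. The potential obstacle is the interaction of equality constraints with trimming---without the WTGh normal form one would need genuine satisfiability tests across mutually constrained subtrees---but the universal productivity of~$\bot$ resolves it: every constraint $v = v'$ has~$v'$ labeled by~$\bot$, so~$\bot$ can match whatever the non-$\bot$ state at~$v$ derives, and constraint satisfiability collapses to ordinary state-level productivity.
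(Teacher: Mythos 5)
Your proposal is correct and follows essentially the same route as the paper's proof: a bottom-up fixed-point computation of productive states followed by a backward reachability pass from the final states, restricting the productions accordingly. You additionally spell out the key point the paper leaves implicit---that the WTGh normal form (every constrained position other than the lexicographically least one is guarded by~$\bot$, which derives every tree) is what lets constraint satisfiability collapse to ordinary state productivity---which is exactly the justification needed for the paper's claim that a simple reachability algorithm suffices.
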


\section{Substitutions in the Presence of Equality Constraints}
This short section recalls from~\cite{WTAc-journal} some definitions
together with a pumping lemma for WTGh, which will be essential for
deciding the integer-weighted HOM-problem. First, we need to refine 
the substitution of trees such that it complies with existing constraints.

\begin{definition}[\protect{see~\cite{WTAc-journal} and
  cf.~\cite{godoy2013hom}}]
  \label{def: pumping} 
  Let~$G= (Q \cup \{\bot\}, \Sigma, F, P, \mathord{\wt})$ be a WTGh,
  and let~$d = (p_1, w_1) \dotsm (p_m, w_m)$ be a complete left-most
  derivation for a tree~$t \in \Tsigma$ to a state~$q \in Q
  \cup \{\bot\}$.  Furthermore, let~$j \in \nve{m}$ such that~$q_j
  = \bot$ if and only if~$q = \bot$; i.e., the target state~$q_j$ of
  production~$p_j$ is~$\bot$ if and only if~$q = \bot$.  We note that
  automatically $q_j = \bot$ whenever~$q = \bot$.  Finally, let
  $d'$~be a derivation to~$q_j$ for some tree~$t' \in \Tsigma$,
  and let~$d'_{\bot}$ be the derivation of~$G$ for~$t'$ where every
  occurring state is~$\bot$.  We define the substitution~$d
  \sem{d'}_{w_j}$ of~$d'$ into~$d$ at~$w_j$ recursively as follows.
  \begin{itemize}
  \item If~$w_j = \varepsilon$ (i.e., $j = m$), then we define $d
    \sem{d'}_{w_j} = d'$.
  \item Otherwise, let $p_m = \ell \stackrel E\longrightarrow q$ be
    the production utilized last, $\pos_Q(\ell) = \{\seq v1n\}$,
    and let~$\seq d1n$ be the derivations for~$\seq{t|}{v_1}{v_n}$
    incorporated in~$d$, respectively.  Obviously there exists~$s \in
    \nve{n}$ such that~$v_s \leq w_j$.  Let~$\hat{w} = v_s^{-1}w_j$,
    which is a position occurring in~$d_s$.  Correspondingly, we
    define~$d'_s = d_s \sem{d'}_{\hat{w}}$ and for every~$i \in\nve{n}
    \setminus \{s\}$, we define~$d'_i = d_i \sem{d'_{\bot}}_{\hat{w}}$
    if~$(v_i, v_s) \in E$ and otherwise~$d'_i = d_i$.  Then~$d
    \sem{d'}_{w_j}$ is obtained by reordering the derivation~$(v_1
    d'_1) \dotsm (v_n d'_n)(p_m,w_m)$ such that it is left-most.
  \end{itemize}
  The tree derived by~$d \sem{d'}_{w_j}$ is denoted by~$t \sem
  {t'}_{w_j}^d$ or simply~$t\sem{t'}_{w_j}$, if the original derivation 
  for~$t$ is clear from the context.
\end{definition}

\begin{example}
  \label{ex: WGTh and subst}
  Consider the WTGh~$G$ of~Example~\ref{ex: WTGh} and the 
  following tree~$t$ it generates into which we want to substitute the
  tree~$t' = \gamma(\alpha)$ at position~$w = 11$.
  \begin{center}
    \begin{tikzpicture}
      \node at (-2,0) (label) {$t =$};
      \node at (0, 0)  (a) {
        \begin{forest}
          for tree={%
            l sep=0.1cm,
            s sep=0.6cm,
            minimum height=0.000008cm,
            minimum width=0.000015cm,
          }
          [$\delta$
          [$\gamma$[$\alpha$]]
          [$\gamma$[$\gamma$[$\alpha$]]]
          [$\gamma$[$\alpha$]]			
          ]
        \end{forest}
      };
      \node at (4,0) (label) {$t \sem{t'}_{11} =$};
      \node at (6.5,0)  (a) {
        \begin{forest}
          for tree={%
            l sep=0.1cm,
            s sep=0.6cm,
            minimum height=0.000008cm,
            minimum width=0.000015cm,
          }
          [$\delta$
          [$\gamma$[$\textcolor{red}{\gamma}$[$\textcolor{red}{\alpha}$,
          edge={red}]]]
          [$\gamma$[$\gamma$[$\textcolor{red}{\gamma}$[$\textcolor{red}{\alpha}$, edge={red}]]]]
          [$\gamma$[$\alpha$]]			
          ]
        \end{forest}
      };
    \end{tikzpicture} 
  \end{center}
  We consider the following complete left-most derivation for~$t$
  to~$q_f$.
  \begin{linenomath*}
    \begin{align*}
      d
      &= \Bigl(\alpha \to q, 11\Bigr)\, \Bigl(\gamma(q) \to q, 1\Bigr)
      \phantom{{}={}} \Bigl(\alpha \to \bot, 211 \Bigr)\,
        \Bigl(\gamma(\bot) \to \bot, 21\Bigr) \\*
      &\phantom{{}={}} \Bigl(\alpha \to q, 31 \Bigr) \, \Bigl(\gamma(q)
        \to q, 3\Bigr) \, \Bigl(\delta\bigl(q, \gamma(\bot), q \bigr)
        \stackrel{1=21}\longrightarrow q_f, \varepsilon\Bigr)
    \end{align*}
  \end{linenomath*}
  Moreover, let~$d' =
  \bigl(\alpha \to q, 1\bigr)\, \bigl(\gamma(q) \to q,
  \varepsilon\bigr)$ and~$d'_\bot = \bigl(\alpha \to \bot, 1\bigr)\,
  \bigl(\gamma(\bot) \to \bot, \varepsilon \bigr)$.  With the
  notation of Definition~\ref{def: pumping}, in the first step we
  have~$v_1 = 1$, $v_2 = 21$, $v_3 = 3$, $d_1 = d_3 = d'$, $d_2 =
  d'_{\bot}$, and $\hat{w} = v_1^{-1} w = 1$.  Respecting the only
  constraint~$1 = 21$, we set~$d'_1 = d_1 \sem{d'}_{\hat{w}} =
  d' \sem{d'}_1$, $d'_2 = d_2 \sem{d'_{\bot}}_{\hat{w}} = d'_{\bot}
  \sem{d'_{\bot}}_1$, and~$d_3' = d_3 = d'$.  Eventually, $d'_1\! =\! (\alpha 
  \to q, 11) (\gamma(q) \to q, 1)  
  (\gamma(q) \to q, \varepsilon)$ and $d'_2 \!=\! (\alpha \to \bot,
  11)(\gamma(\bot) \to \bot, 1)  (\gamma(\bot) \to \bot,
  \varepsilon)$.  Hence, we obtain the following derivation~$d
  \sem{d'}_{11}$ for our new tree~$t \sem{t'}_{11}$.
  \begin{linenomath*}
    \begin{align*}
      d \sem{d'}_{11}
      &= \Bigl(\alpha \to q, 111\Bigr) \, \Bigl(\gamma(q) \to q, 11
        \Bigr) \, \Bigl(\gamma(q) \to q, 1\Bigr)
        \, \Bigl(\alpha \to \bot, 2111 \Bigr) \,
        \Bigl(\gamma(\bot) \to \bot, 211\Bigr) \\*
         &\phantom{{}={}} \Bigl(\gamma(\bot) \to
        \bot, 21 \Bigr) 
      \phantom{{}={}} \Bigl(\alpha \to q, 31 \Bigr) \, \Bigl(\gamma(q)
        \to q, 3 \Bigr) \, \Bigl(\delta\bigl(q, \gamma(\bot), q\bigr)
        \stackrel{1=21}\longrightarrow q_f, \varepsilon \Bigr)
    \end{align*}
  \end{linenomath*}
  Although~$t|_{31} = \alpha$ also coincides with the
  subtree~$t|_{11} = \alpha$ we 
  replaced, these two subtrees are not equality-constrained, so the
  simultaneous substitution does not affect~$t|_{31}$.
\end{example}

The substitution of Definition~\ref{def: pumping} allows us to prove a
pumping lemma for the class of WTGh: 
%
%
If~$d$~is an accepting derivation of a
WTGh~$G = (Q \cup \{\bot\}, \Sigma, F, P, \mathord{\wt})$ for a
tree~$t$ with~$\he(t) > \he(G)$, then there exist at
least~$\abs{Q\setminus\{\bot\}} +1$ 
positions~$w_1 > \dotsb > w_{\abs Q+1}$ in~$t$ at which $d$~applies
productions with non-sink target states.  By the pigeonhole principle,
there thus exist two positions~$w_i > w_j$ in~$t$ at which $d$~applies
productions with the same non-sink target state.  Employing the
substitution we just defined, we can substitute~$t|_{w_j}$ into~$w_i$
and obtain a derivation of~$G$ for~$t \sem{t|_{w_j}}_{w_i}$.  This
process can be repeated to obtain an infinite sequence of trees
strictly increasing in size.  Formally, the following lemma was proved 
in~\cite{WTAc-journal}.

\begin{lemma}[\protect{\cite[Lemma~4]{WTAc-journal}}]
  \label{lm:existence of subs}
  Let~$G = (Q \cup \{\bot\}, \Sigma, F, P, \mathord{\wt})$ be a WTGh.
  Consider some tree~$t \in T_\Sigma$ and non-sink state~$q \in Q
  \setminus \{\bot\}$ such that~$\he(t) > \he(G)$ and~$D_G^q(t) \neq
  \emptyset$.  Then there are infinitely many pairwise distinct
  trees~$t_0, t_1, \dotsc$ such that~$D_G^q(t_i) \neq \emptyset$ for
  all~$i \in \N$.
\end{lemma}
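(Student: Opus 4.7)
The plan is to apply a pigeonhole argument along a long root-to-leaf branch of $t$ and then iterate the substitution of Definition~\ref{def: pumping} to generate strictly larger and larger trees, each still derivable to~$q$.

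More concretely, I would fix a complete left-most derivation $d = (p_1, w_1) \dotsm (p_m, w_m) \in D_G^q(t)$ and select a root-to-leaf branch of $t$ of length $> \he(G) = \abs Q \cdot \he(P)$, which exists because $\he(t) > \he(G)$. Along this branch, the derivation $d$ applies a chain of productions: each production covers at most $\he(P)$ levels of the branch, so on this branch $d$ must use more than $\abs Q$ productions. Moreover, the topmost production on the branch has target state $q \in Q \setminus \{\bot\}$, and by the WTGh restriction a production with target~$\bot$ can only appear under another production with target~$\bot$ (the sink state absorbs everything below it). Hence the productions applied along the branch with non-sink target form a prefix of that chain of length at least $\abs{Q \setminus \{\bot\}} + 1$, yielding positions $w_{i_1} < w_{i_2} < \dotsb < w_{i_{\abs Q + 1}}$ on this branch at which $d$ uses productions with target in $Q \setminus \{\bot\}$.

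By the pigeonhole principle, two of these positions, say $w_j < w_i$, carry productions with the same non-sink target state $q' \in Q \setminus \{\bot\}$. The sub-derivation of $d$ incorporated at $w_j$ is then a complete left-most derivation $d_j$ for $t|_{w_j}$ to $q'$, and likewise $d$ has a sub-derivation $d_i$ for $t|_{w_i}$ to $q'$. Setting $d' = d_j$, Definition~\ref{def: pumping} gives us $d \sem{d'}_{w_i}$, which is a complete left-most derivation of $G$ for $t \sem{t|_{w_j}}^d_{w_i}$ to~$q$. The substitution is designed precisely so that equality-constrained sibling positions are simultaneously replaced by $\bot$-derivations for $t|_{w_j}$, so the constraints of every production in $d$ remain satisfied. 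Since $w_j$ is a strict prefix of $w_i$, the subtree $t|_{w_j}$ strictly contains $t|_{w_i}$, so $\size(t \sem{t|_{w_j}}^d_{w_i}) > \size(t)$.

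Setting $t_0 = t$ and $t_{k+1} = t_k \sem{t_k|_{w_j^{(k)}}}^{d^{(k)}}_{w_i^{(k)}}$, where $d^{(k)} \in D_G^q(t_k)$ and $w_j^{(k)}, w_i^{(k)}$ are obtained by the same pigeonhole argument applied to $t_k$ (whose height only grew), I obtain an infinite sequence $t_0, t_1, t_2, \dotsc$ with $\size(t_0) < \size(t_1) < \dotsb$ and $D_G^q(t_k) \neq \emptyset$ for every $k \in \N$. Strict growth in size forces pairwise distinctness, proving the lemma. The main obstacle is the first paragraph: justifying that $\abs Q + 1$ non-sink productions must occur along one branch. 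The key observation that makes this work is the WTGh convention that once the derivation enters state~$\bot$, it stays in~$\bot$ throughout the subtree, so the non-sink productions on a given branch form an uninterrupted prefix of the production chain along that branch and their count is controlled purely by $\he(P)$.
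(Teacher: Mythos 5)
Your overall strategy is exactly the one the paper sketches before stating the lemma (and attributes to \cite{WTAc-journal}): find $\abs{Q\setminus\{\bot\}}+1$ nested positions carrying non-sink target states, apply the pigeonhole principle to get two nested positions with the same non-sink state, and iterate the substitution of Definition~\ref{def: pumping} to get strictly growing trees. The iteration and the distinctness argument are fine.

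However, the step you yourself single out as the main obstacle is resolved incorrectly. It is not true that an \emph{arbitrary} root-to-leaf branch of length greater than $\he(G)$ carries more than $\abs Q$ non-sink productions, and the justification you give (``a production with target $\bot$ can only appear under another production with target $\bot$'') is backwards: in a WTGh, productions targeting $\bot$ are applied precisely \emph{underneath} non-sink productions, namely below the $\bot$-guarded positions of their left-hand sides. So a branch can leave the non-sink region after a single production. Concretely, in Example~\ref{ex: WTGh} take $t=\delta(\gamma^n(\alpha),\gamma(\gamma^n(\alpha)),\gamma(\alpha))$ for large $n$: the longest branch passes through position $2$, and along it the derivation applies the non-sink production $\delta(q,\gamma(\bot),q)\stackrel{1=21}\longrightarrow q_f$ at the root and only $\bot$-productions below, so your ``prefix'' has length $1$ no matter how long the branch is. What saves the lemma is the defining feature of a WTGh that your argument never uses: every $\bot$-guarded position $v$ in a left-hand side is equality-constrained to a non-$\bot$ position $u$, so $t|_v=t|_u$ and the tall subtree reappears verbatim under a non-sink derivation at $u$. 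One must therefore build the chain of non-sink positions by descending into the subtree of maximal height and, whenever that subtree sits below a $\bot$-guard, \emph{redirecting} to the constrained mirror position of equal height; an induction on height then yields a nested (not literally branch-following) sequence of at least $\abs{Q\setminus\{\bot\}}+1$ non-sink positions. With that repair the rest of your proof goes through as written.
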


\begin{example}
  Recall the WTGh~$G$ of Example~\ref{ex: WTGh}.  We have~$\he(P) = 2$
  and~$\he(G) = 4$, but for simplicity, we choose the smaller tree~$t
  = \delta(\gamma(\alpha), \gamma(\gamma(\alpha)), \gamma(\alpha))$,
  which we also considered in Example~\ref{ex: WGTh and subst},
  since it also allows pumping.  The derivation~$d$ presented in
  Example~\ref{ex: WGTh and subst} for~$t$ applies the
  productions~$(\alpha \to q)$ at~$11$ and $\gamma(q) \to q$ at~$1$,
  so we substitute~$t|_1 = \gamma(\alpha)$ at~$11$ to obtain~$t
  \sem{\gamma(\alpha)}_{11}$.  In fact, this is exactly the
  substitution we illustrated in Example~\ref{ex: WGTh and subst}.
\end{example}

\section{The Decision Procedure}
Let us now turn to the $\N$"~weighted version of the HOM-problem.
In the following, we show that the regularity of the homomorphic image
of a regular $\N$-weighted tree language is decidable in polynomial 
time. More precisely, we prove the following theorem.

\begin{theorem}
  \label{thm: main}
  The weighted HOM-problem over~$\N$ is polynomial; i.e.\@ for fixed
  ranked alphabets~$\Gamma$~and~$\Sigma$, given a trim WTA~$H$
  over~$\Gamma$, and a nondeleting, nonerasing tree homomorphism~$h
  \colon T_\Gamma \to \Tsigma$, it is decidable in polynomial time
  whether 
  $h_{\sem H}$~is regular.
\end{theorem}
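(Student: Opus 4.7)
The plan is to decide the weighted HOM-problem in three stages. First, I would invoke Theorem~\ref{thm:hom} on the trim WTA~$H$ and the nondeleting, nonerasing homomorphism~$h$ to obtain in polynomial time a trim WTGh~$G' = (Q \cup \{\bot\}, \Sigma, F, P, \mathord{\wt})$ with $\sem{G'} = h_{\sem H}$, where $\size(G') \in \mathcal{O}\bigl(\size(H) \cdot \size(h)\bigr)$ and $\he(G') \in \mathcal{O}\bigl(\size(h)\bigr)$. After this first step, the original question reduces to: is $\sem{G'}$ regular?

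Second, I would formalise the \emph{large duplication property} of~$G'$ as the existence of an accepting derivation together with a constrained position at which the substitution from Definition~\ref{def: pumping} can be iterated to produce an infinite family of accepted trees whose duplicated subtrees are arbitrarily deep. By Lemma~\ref{lm: ldpp decid}, this property can be decided in polynomial time in~$\size(G')$, and hence in polynomial time in the size of the original inputs. The decision procedure then returns \emph{regular} if and only if $G'$ does not have the large duplication property; this equivalence is the actual mathematical content that must be established.

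Correctness of the equivalence splits into two directions. For the easier direction, if $G'$ lacks the large duplication property, then in every accepting derivation the subtrees forced to coincide by equality constraints have height bounded in terms of~$\he(G')$. One can then enumerate the finitely many possible values of these bounded-height subtrees and absorb them into the state space, obtaining an equivalent (possibly exponentially large) WTG and hence regularity of~$\sem{G'}$; this construction affects only the correctness argument, not the running time of the decision procedure itself. For the opposite direction, assume the large duplication property holds. Using Lemma~\ref{lm:existence of subs}, I would extract an infinite family $(t_i)_{i \in \N}$ of accepted trees obtained by iterated pumping at a constrained position, so that each pumping step forces at least two independent copies of an arbitrarily deep subtree and multiplies the weight of the distinguished derivation in a controlled multiplicative pattern.

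The main obstacle is ruling out that the remaining accepting derivations of~$G'$ on the same trees~$t_i$ could compensate for this multiplicative blow-up and still yield a regular weighted tree language. Since weights lie in~$\N$, which is zero-sum free, outright cancellation is impossible, but one still has to argue quantitatively that no alternative derivation family can match the growth induced by genuine duplication. Here I would apply Ramsey's theorem to the pumped sequence: colouring pairs $(i,j)$ by qualitative data of the competing derivations (sequences of states reached, positions and shapes of constraints used, and the resulting weight ratios) yields an infinite monochromatic subsequence on which every derivation follows a uniform combinatorial pattern, and on which the distinguished duplicating derivation dominates asymptotically. A hypothetical equivalent WTA can only produce weights satisfying a linear recurrence along any such pumped family; this is incompatible with the doubly-exponential growth produced by repeated duplication, which furnishes the witness of nonregularity and completes the proof.
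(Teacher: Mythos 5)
Your overall architecture matches the paper's: construct the WTGh via Theorem~\ref{thm:hom}, decide the large duplication property via Lemma~\ref{lm: ldpp decid}, and in its absence obtain regularity by expanding the bounded-height constrained subtrees (the paper's linearization $\mathrm{lin}(G)$). The genuine gap is in the nonregularity direction, where your closing argument rests on a quantitative claim that is false. Iterated duplication does not produce doubly-exponential weight growth: each pumping step inserts a fixed derivation segment, so the weight of the $i$-th pumped tree under the distinguished derivation is essentially geometric in $i$ and itself satisfies a linear recurrence. A growth-rate comparison against a hypothetical equivalent WTA therefore yields no contradiction --- indeed in Example~\ref{ex: WTGh} all relevant weights are small powers of $2$, and the nonregularity is entirely a matter of the diagonal support pattern forced by the equality constraint, not of magnitude.

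The paper's actual witness is two-dimensional and linear-algebraic rather than asymptotic. After decomposing $\sem G = \sem{G_1} + \sem{G_2}$ (Lemma~\ref{lm: decomp}, a step your sketch omits: it isolates a single final production carrying the tall constraint so that $\sem{G_1}(C(t_i, t_j, \dotsc, t_j)) > 0$ if and only if $i = j$), one evaluates on a context $C$ filled with pairs $(t_k, t_h)$ from the pumped family. Ramsey's theorem is applied not to weight ratios but to the sets $D_{kh}$ of derivations of $G_2$ restricted to $\pos(C)$, in order to force $D_< = D_> \subseteq D_=$; this yields a representation $\sem G(C_{kh}) = \nu_k^{(1)} N \nu_h^{(2)} + \delta_{kh}\mu_k$ with $\mu_k > 0$. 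Assuming regularity gives a multilinear representation $g(\kappa_k, \kappa_h, \dotsc, \kappa_h)$; choosing a finite generating set of the concatenated vectors $\langle \kappa_i, \nu_i^{(1)}\rangle$ over $\mathbb{Q}$ and expanding $\langle \kappa_{K+1}, \nu_{K+1}^{(1)}\rangle$ linearly forces $\mu_{K+1} = 0$, the desired contradiction. You would need to replace your asymptotic argument by something of this diagonal-versus-bilinear-form type; as written, the last step of your proof does not go through.
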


For the proof, we follow the general outline of the unweighted
case~\cite{godoy2013hom}.  Given a regular weighted tree language~$A$ 
(represented by a trim WTA) and a tree homomorphism~$h$, we begin by 
constructing a trim WTGh~$G$ for its image~$\sem G = h_A$ 
applying~\cref{thm:hom}.  We then show that $\sem G$~is regular if and
only if for all derivations of~$G$ the equality constraints occurring
in the derivation only apply to subtrees of height at most~$\he(G)$.
In other words, if there exists a
production~$\ell\stackrel E\longrightarrow q$ in~$G$ such that for
some equality constraint~$(u,v) \in E$ with non-sink
state~$q = \ell(u)$ there exists a tree~$t \in T_\Sigma$
with~$\he(t) > \he(G)$ and~$D_G^q(t) \neq \emptyset$, then $\sem G$~is
not regular, and if no such production exists, then $\sem G$~is
regular.  There are thus three parts to our proof.  First, we show
that the existence of such a production is decidable in polynomial
time. Then we show that $\sem G$~is regular if no such production
exists.  Finally, we show that $\sem G$~is not regular if such a
production exists.  For convenience, we attach a name to the property
described here.

\newcommand{\ldpp}{large duplication property}
\begin{definition}
  Let~$G = (Q \cup \{\bot\}, \Sigma, F, P, \mathord{\wt})$ be a trim
  WTGh.  We say that $G$~has the \emph{\ldpp{}} if there exist a
  production~$\ell\stackrel E\longrightarrow q \in P$, an equality
  constraint~$(u,v) \in E$ with~$\ell(u) \neq \bot = \ell(v)$, and a
  tree~$t \in T_\Sigma$ such that~$\he(t) > \he(G)$
  and~$D_G^{\ell(u)}(t) \neq \emptyset$.
\end{definition}

We start with the decidability of the \ldpp{}.

\begin{lemma}
  \label{lm: ldpp decid}
  Consider a fixed ranked alphabet~$\Sigma$.  The following is
  decidable in polynomial time:  Given a trim WTGh~$G$,
  does it satisfy the \ldpp{}?
\end{lemma}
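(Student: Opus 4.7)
The plan is to reduce the \ldpp{} to a polynomial-time cycle-reachability question on an auxiliary directed graph built from the production structure of $G$.

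First, I would reformulate the property. It asks for a production $p = \ell \stackrel E\longrightarrow q \in P$, an equality constraint $(u,v) \in E$ with $q^* := \ell(u) \neq \bot$, and a tree $t$ of height greater than $\he(G)$ with $D_G^{q^*}(t) \neq \emptyset$. Writing $L(q^*) = \{t \in T_\Sigma \mid D_G^{q^*}(t) \neq \emptyset\}$, the pumping lemma~\cref{lm:existence of subs} makes the last condition equivalent to $L(q^*)$ being infinite. Since the number of candidate pairs $(p,(u,v))$ is polynomial in $\size(G)$, it suffices to decide, for each non-sink state $q^*$ that actually appears as $\ell(u)$ in some constraint, whether $L(q^*)$ is infinite.

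Next I would introduce the directed graph $\mathcal{R}$ on vertex set $Q \cup \{\bot\}$ whose edge $q \to q'$ exists whenever some production $\ell \stackrel E\longrightarrow q \in P$ has a state-labelled position $v \neq \varepsilon$ in $\ell$ with $\ell(v) = q'$. The key claim is that $L(q^*)$ is infinite iff $q^*$ can reach a directed cycle in $\mathcal{R}$. For the forward direction, take any $t \in L(q^*)$ with $\he(t) > \he(G) = \abs{Q} \cdot \he(P)$ and an associated derivation $d$; restricting $d$ to a longest root-to-leaf path of $t$, consecutive production applications add at most $\he(P)$ to the depth, so more than $\abs{Q}$ productions occur along this path. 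Their target states form a walk in $\mathcal{R}$ from $q^*$, and by pigeonhole some state recurs, giving a reachable cycle. Conversely, suppose $q^*$ reaches a cycle via a walk $q^* \to r_1 \to \dotsb \to q_0$ followed by a cycle $q_0 \to \dotsb \to q_0$. Every edge corresponds to a concrete production, and since $G$ is trim I can pick seed trees for every sister state position; all equality constraints are automatically satisfied by copying the chosen subtree at the lexicographically least non-$\bot$ position to each of its $\bot$ partners. Iteratively applying the substitution of~\cref{def: pumping} along the cycle yields strictly taller derivations to $q_0$, and plugging them into the context from $q^*$ to $q_0$ gives derivations to $q^*$ for trees of unbounded height.

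Finally, $\mathcal{R}$ can be built in time linear in $\size(G)$, and testing whether a cycle is reachable from a given vertex (for example via SCC decomposition) is also linear; looping over all candidate pairs keeps the total runtime polynomial. The main obstacle is the converse direction of the key claim: one has to check that pumping along a cycle in $\mathcal{R}$ is never blocked by the equality constraints occurring in the productions corresponding to cycle edges. This is exactly what the substitution of~\cref{def: pumping} accomplishes, since it updates all equality-linked subtrees simultaneously, so the produced derivation automatically respects every constraint of every production it traverses.
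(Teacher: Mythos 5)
Your overall strategy---reduce the \ldpp{} to cycle-reachability in a directed graph extracted from the left-hand sides of the productions, using the pumping substitution of Definition~\ref{def: pumping} for one direction and a pigeonhole argument along a tall branch for the other---is exactly the paper's approach (the paper's graph is the reverse of your $\mathcal{R}$, and it phrases the condition as a cycle at some non-sink $q'$ together with a path out of $q'$). However, your key claim is false as stated, because you keep the sink state $\bot$ in $\mathcal{R}$ and do not exclude cycles through it. Every WTGh contains the productions $\gamma(\bot,\dotsc,\bot)\to\bot$, so $\bot$ carries a self-loop as soon as a symbol of positive rank is processed in the sink state, and every state targeted by a production with a constrained ($\bot$-labelled) position has an edge into $\bot$. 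Concretely, take the trim WTGh with productions $\alpha\to q_0$, $\sigma(q_0,\bot)\stackrel{1=2}{\longrightarrow}q_1$, $\delta(q_1,\bot)\stackrel{1=2}{\longrightarrow}q_{\text f}$ and the usual sink productions, with $q_{\text f}$ final: here $L(q_1)=\{\sigma(\alpha,\alpha)\}$ is finite and the grammar does not have the \ldpp{}, yet $q_1\to\bot\to\bot$ is a path to a cycle in your $\mathcal{R}$, so your test answers ``yes''. Note that your own repair mechanism does not apply here: Definition~\ref{def: pumping} only allows substitution at positions whose target state is non-sink when the ambient derivation targets a non-sink state, so one cannot pump along a $\bot$-cycle.

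The fix is what the paper makes explicit: require the cycle to lie in $Q\setminus\{\bot\}$ (equivalently, delete $\bot$ from $\mathcal{R}$ before the reachability test). With that correction your backward direction goes through as written. The forward direction then needs one more observation: the recurring state produced by your pigeonhole argument along a longest branch might be $\bot$, since once a branch enters a $\bot$-labelled position of a left-hand side it stays in the sink state forever. Because in a WTGh every such position is equality-constrained to the lexicographically least, non-$\bot$ position of its class, the tall subtree below it reappears below that non-sink position, so the branch can be rerouted (changing the depth by at most $\he(P)$ per rerouting) into one that visits only non-sink states, and pigeonhole on that branch yields the required non-sink cycle. Everything else---the equivalence with infiniteness of $L(q^*)$ via Lemma~\ref{lm:existence of subs}, the use of trimness to obtain seed trees, and the complexity analysis---is fine.
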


\begin{proof}
  Let $G = (Q \cup \{\bot\}, \Sigma, F, P, \mathord{\wt})$ and
  construct the directed graph~$G = (Q, E)$ with
  edges~$E = \bigcup_{\ell \stackrel E\longrightarrow q \in P} \{ (q',
  q) \mid q' \in Q, \pos_{q'}(\ell) \neq \emptyset\}$.  Clearly, the
  \ldpp{} is equivalent to the condition that there exists a
  production~$\ell \stackrel E\longrightarrow q \in P$, an equality
  constraint~$(u, v) \in E$ with~$\ell(u) \neq \bot = \ell(v)$, and a
  state~$q' \in Q \setminus \{\bot\}$ such that there exists a cycle
  from~$q'$ to~$q'$ in~$G$ and a path from~$q'$ to~$q$ in~$G$.  This
  equivalent condition can be checked in polynomial time.  The
  equivalence of the two statements is easy to establish.  If the
  \ldpp{} holds, then the pumping lemma~\cite[Lemma~4]{WTAc-journal}
  exhibits the required cycle and path.  Conversely, if the cycle and
  path exist, then the pumping lemma~\cite[Lemma~4]{WTAc-journal} can
  be used to derive arbitrarily tall trees for which a derivation
  exists.
\end{proof}

Next, we show that a WTGh~$G$ generates regular~$\sem G$ if it does
not satisfy the \ldpp{}.  To this end, we construct the
\emph{linearization} of~$G$.  The linearization of a WTGh~$G$ is a WTG
that simulates all derivations of~$G$ which only ensure the
equivalence of subtrees of height at most~$\he(G)$.  This is achieved
by replacing every production~$\ell \stackrel E\longrightarrow q$
in~$G$ by the collection of all productions~$\ell' \to q$ which can be
obtained by substituting each position constrained by~$E$ with a
compatible tree of height at most~$\he(G)$ that satisfies the equality
constraints of~$E$.  Note that positions in~$\ell$ that are unconstrained 
by~$E$ are unaffected by these substitutions.  Formally, we define the 
linearization following~\cite[Definition~7.1]{godoy2013hom}.

\newcommand{\ling}{\mathrm{lin}(G)}
\begin{definition}
  Let~$G = (Q \cup \{\bot\}, \Sigma, F, P, \mathord{\wt})$ be a
  WTGh.  The linearization~$\ling$ of~$G$ is the WTG~$\ling = (Q \cup
  \{\bot\}, \Sigma, F, P_{\mathrm{lin}},
  \mathord{\wt_{\mathrm{lin}}})$, where
  $P_{\mathrm{lin}}$~and~$\wt_{\mathrm{lin}}$ are defined as follows.
  For~$\ell' \in T_\Sigma(Q) \setminus Q$ and~$q \in Q$, we
  let~$(\ell' \to q) \in P_{\mathrm{lin}}$ if and only if there exist
  a production~$(\ell \stackrel E\longrightarrow q) \in P$,
  positions~$\seq w1k \in \pos_{Q \cup \{\bot\}}(\ell)$, and
  trees~$\seq t1k \in T_\Sigma$ with
  \begin{itemize}
  \item $\{\seq w1k\} = \bigcup_{w \in \pos_\bot(\ell)} [w]_E$; i.e.,
    $E$~constrains exactly the positions~$\seq w1k$,
  \item $t_i = t_j$ if~$(w_i, w_j) \in E$ for all~$i,j \in \nve{k}$,
  \item $\ell' = \ell[t_1]_{w_1} \dotsm [t_k]_{w_k}$, and
  \item $D_G^{\ell(w_i)}(t_i) \neq \emptyset$ and~$\he(t_i) \leq
    \he(G)$ for all~$i \in \nve{k}$.
  \end{itemize}
  For every such production~$\ell' \to q$ we
  define~$\wt_{\mathrm{lin}}(\ell' \to q)$ as the sum over all
  weights
  \[ \wt(\ell \stackrel E\longrightarrow q) \cdot \prod_{i \in
      \nve{k}} \wt_G^{\ell({w_i})}(t_i) \] for all~$(\ell \stackrel
  E\longrightarrow q) \in P$, $\seq w1k \in \pos_{Q \cup
    \{\bot\}}(\ell)$, and~$\seq t1k \in T_\Sigma$ as above.
\end{definition}

If a trim WTGh~$G$ does not satisfy the \ldpp{}, then every equality
constraint in every derivation of~$G$ only ensures the equality of
subtrees of height at most~$\he(G)$.  Thus, $\ling$~and~$G$ generate
the same weighted tree language~$\sem G = \sem{\ling}$, which is then
regular because $\ling$~is a WTG. Thus we summarize:
\begin{proposition}~\label{prop: no ldpp reg}
Let~$G$ be a trim WTGh and suppose that~$G$ does not satisfy the 
\ldpp{}. Then~$\sem G$ is a regular weighted tree language.
\end{proposition}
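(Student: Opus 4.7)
The plan is to show $\sem G = \sem{\ling}$ and then invoke that $\ling$ is a WTG by construction (all its productions have empty equality constraint sets), so $\sem{\ling}$ is regular and hence so is $\sem G$. The whole argument pivots on a single consequence of the failure of the \ldpp{}: for every production $\ell \stackrel E\longrightarrow q \in P$, every $(u,v) \in E$ with $\ell(u) \neq \bot$, and every $t' \in T_\Sigma$ with $D_G^{\ell(u)}(t') \neq \emptyset$, we have $\he(t') \leq \he(G)$. This is just the contrapositive of the definition. In particular, any subtree that an accepting $G$-derivation places at a constrained non-sink position has bounded height, so the trees $\seq t1k$ enumerated in the definition of $\ling$ already cover every concrete subtree that could ever be instantiated at a constrained position in a $G$-derivation.

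I would then prove $\wt_G^q(t) = \wt_{\ling}^q(t)$ for every $q \in Q \cup \{\bot\}$ and every $t \in T_\Sigma$ by induction on $\he(t)$. The sink case $q = \bot$ is immediate since the sink productions are shared, so both weights equal~$1$. For non-sink $q$, decompose by top production: expanding $\wt_G^q(t)$ gives $\sum_{p = \ell \stackrel E\to q} \wt(p) \cdot \prod_{w \in \pos_{Q \cup \{\bot\}}(\ell)} \wt_G^{\ell(w)}(t|_w)$, summed over those $p$ for which $t$ matches $\ell$ at its non-state positions and satisfies $E$. Expanding $\wt_{\ling}^q(t)$ analogously and substituting the definition of $\wt_{\ling}(\ell' \to q)$ turns it into a double sum which, after reordering, ranges over tuples $(p, \seq w1k, \seq t1k)$ with $t_i = t|_{w_i}$ (forced by the matching condition on $\ell' = \ell[t_1]_{w_1} \dotsm [t_k]_{w_k}$). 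The inner product separates into factors at constrained heads $\tilde w_s$ contributing $\wt_G^{\ell(\tilde w_s)}(t|_{\tilde w_s})$, factors at the remaining constrained sink positions each equal to~$1$, and factors at unconstrained state positions handled by the induction hypothesis. Matching the two expressions term by term yields $\wt_G^q(t) = \wt_{\ling}^q(t)$, and summing against $F$ gives $\sem G(t) = \sem{\ling}(t)$ for every $t$.

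The main obstacle I anticipate is the combinatorial bookkeeping behind this identity. Distinct tuples $(\ell, E, \seq w1k, \seq t1k)$ may produce the same linearized left-hand side $\ell'$, and all of them are lumped into the single weight $\wt_{\ling}(\ell' \to q)$, so one must verify that the induced correspondence between $G$-derivations and pairs of (linearized top production, unconstrained sub-derivations) is a bijection and that no contribution is missed or double-counted. The no-LDPP hypothesis is essential precisely at this point: it guarantees that the height restriction $\he(t_i) \leq \he(G)$ in the definition of $\ling$ never excludes a $G$-derivation, since whenever a $G$-derivation would contribute a nonzero weight at a constrained non-sink position, the subtree placed there automatically has height at most $\he(G)$.
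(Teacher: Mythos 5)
Your proposal is correct and follows essentially the same route as the paper: both arguments rest on the observation that the failure of the \ldpp{} bounds every equality-constrained subtree by $\he(G)$, so that $\sem G = \sem{\ling}$ and regularity follows because $\ling$ is a WTG. The paper states this equality in a single sentence, whereas you additionally sketch the induction on $\he(t)$ establishing $\wt_G^q(t) = \wt_{\ling}^q(t)$ and correctly flag the only delicate point, namely that several tuples $(\ell, E, \seq w1k, \seq t1k)$ may collapse to one linearized left-hand side and must be accounted for exactly once inside $\wt_{\mathrm{lin}}(\ell' \to q)$.
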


Finally, we show that if a WTGh~$G = (Q \cup \{\bot\}, \Sigma, F, P,
\mathord{\wt})$ satisfies the \ldpp{}, then $\sem G$~is not regular.
For this, we first show that if $G$~satisfies the
\ldpp{},  then we can decompose it into two WTGh $G_1$~and~$G_2$ such
that~$\sem G = \sem{G_1} + \sem{G_2}$ and at least one of
$\sem{G_1}$~and~$\sem{G_2}$ is not regular.  To conclude the desired
statement, we then show that the sum~$\sem G = \sem{G_1} + \sem{G_2}$
is also not regular.  For the decomposition, consider the following
idea.  Assume that there exists a production~$p = (\ell \stackrel
E\longrightarrow q) \in P$ as in the \ldpp{} such that~$F(q) \neq 0$.
Then we create two copies $G_1$~and~$G_2$ of~$G$ as follows.  In~$G_1$
we set all final weights to~$0$, add a new state~$f$ with final
weight~$F(q)$, and add the new production~$(\ell \stackrel
E\longrightarrow f)$ with the same weight as~$p$.  On the other hand,
in~$G_2$ we set the final weight of~$q$ to~$0$, add a new state~$f$
with final weight~$F(q)$, and for every production~$p' = (\ell'
\stackrel{E'}\longrightarrow q) \in P$ except~$p$, we add the new
production~$\ell' \stackrel{E'}\longrightarrow f$ to~$G_2$ with the
same weight as~$p'$.  Then~$\sem G = \sem{G_1} + \sem{G_2}$ 
because every derivation of~$G$ whose last production is~$p$ is now a
derivation of~$G_1$ to~$f$, and every other derivation is either
directly a derivation of~$G_2$ or, in case of other derivations
to~$q$, is a derivation of~$G_2$ to~$f$.

By our assumption on the production~$p = (\ell \stackrel
E\longrightarrow q)$, there exist a tall tree~$t \in T_\Sigma$
with~$\he(t) > \he(G)$ and a constraint~$(u,v) \in E$ with $\ell(u)
\neq \bot = \ell(v)$ and~$D_G^{\ell(u)}(t) \neq \emptyset$.  Thus,
every tree~$t'$ generated by~$G_1$ satisfies~$t'|_u = t'|_v$, and by
Lemma~\ref{lm:existence of subs}, there exist infinitely many pairwise
distinct trees with a derivation to~$\ell(u)$.  The support (i.e., set
of nonzero weighted trees) of $\sem{G_1}$ is therefore not a regular
tree language.  This implies that $\sem{G_1}$~is not regular as the
support of every regular weighted tree language over~$\N$ is a regular
tree language~\cite{fulvog09}. 

In general, we cannot expect that a production~$\ell \stackrel
E\longrightarrow q$ as in the \ldpp{} exists with~$F(q) \neq 0$. For details
on the general case, we refer the reader to the appendix. 

\begin{restatable}{lemma}{lmdecomp}
  \label{lm: decomp}
  Let $G = (Q \cup \{\bot\}, \Sigma, F, P, \mathord{\wt})$ be a trim
  WTGh that satisfies the \ldpp{}.  Then there exist two trim WTGh 
  $G_1 = (Q_1 \cup \{\bot\}, \Sigma, F_1, P_1, \mathord{\wt_1})$ and
  $G_2 = (Q_2 \cup \{\bot\}, \Sigma, F_2, P_2, \mathord{\wt_2})$
  such 
  that~$\sem G=\sem{G_1}+\sem{G_2}$ and for some~$f\in Q_1$ we have 
  \begin{itemize}
  \item $F_1(f) \neq 0$ and~$F_1(q) = 0$ for
    all~$q \in Q_1 \setminus \{f\}$, and 
  \item there exists exactly one production~$p_{\text{\upshape f}} =
    (\ell_{\text{\upshape f}} \stackrel{E_{\text{\upshape f}}}\longrightarrow
    f) \in P_1$ with target state~$f$, and for this production there
    exists~$(u, v) \in E_{\text{\upshape f}}$ with~$\ell_{\text{\upshape f}}(u) 
    \neq\ell_{\text{\upshape f}}(v) = \bot$ and an infinite sequence of 
    pairwise distinct trees~$t_0, t_1, t_2, \dotsc \in T_\Sigma$ such
    that~$D_{G_1}^{\ell_{\text{\upshape f}}(u)}(t_i) \neq \emptyset$ for all~$i 
    \in \N$.
  \end{itemize}
\end{restatable}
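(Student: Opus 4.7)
The plan is to reduce the general case to the one sketched in the paper (where the large-duplication production $p = \ell \stackrel{E}{\longrightarrow} q$ has $F(q) \neq 0$) by composing productions along a path from $q$ up to a final state of $G$. Since $G$ is trim, $p$ participates in some accepting derivation, so there exist a state $q_n$ with $F(q_n) \neq 0$ and a chain of productions $p_1, \ldots, p_n \in P$, with $p_i = \ell_i \stackrel{E_i}{\longrightarrow} q_i$ such that $q_{i-1}$ appears at some position $u_i \in \pos_{q_{i-1}}(\ell_i)$ (writing $q_0 = q$ and $p_0 = p$).

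The WTGh $G_1$ is then obtained by composing $p_0, p_1, \ldots, p_n$ into a single super-production $p_f = \ell_f \stackrel{E_f}{\longrightarrow} f$ of weight $\prod_{i=0}^n \wt(p_i)$, where $f$ is a fresh state. The left-hand side $\ell_f$ arises by nesting $\ell$ at position $u_1$ into $\ell_1$, the result at position $u_2$ into $\ell_2$, and so on up to $\ell_n$; the constraint set $E_f$ is the union of $E, E_1, \ldots, E_n$ with positions translated into $\ell_f$. In particular, the constraint $(u, v) \in E$ survives as $(w_p u, w_p v) \in E_f$, where $w_p$ is the position of the embedded $\ell$ in $\ell_f$, and $\ell_f(w_p u) = \ell(u) \neq \bot = \ell(v) = \ell_f(w_p v)$ still holds. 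Taking $G_1$ with state set $Q \cup \{f\} \cup \{\bot\}$, production set $P \cup \{p_f\}$, and final weights $F_1(f) = F(q_n)$ and zero elsewhere makes $f$ the unique final state of $G_1$ and $p_f$ the unique production targeting $f$; \cref{lm:existence of subs} applied at $\ell(u)$ then supplies the required infinite sequence of pairwise distinct trees with derivations to $\ell_f(w_p u) = \ell(u)$.

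The main technical obstacle is constructing $G_2$ so that $\sem G = \sem{G_1} + \sem{G_2}$. The key identity is $\wt_G^{q_i}(t') = \wt(p_i) \cdot D_i(t') + R_i(t')$, where $D_i(t')$ is the contribution of derivations of $G$ to $q_i$ that apply $p_i$ at the root and $R_i(t')$ is the contribution from all other productions targeting $q_i$. Inserting this identity successively along the spine into $\sem G(t) = \sum_{q'} F(q') \wt_G^{q'}(t)$ separates a ``full-spine'' summand (which coincides with $\sem{G_1}(t)$) from $n+1$ ``spine-break'' summands indexed by the level $i$ at which the derivation first deviates from the spine, plus the contributions of accepting derivations targeting final states other than $q_n$. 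I would realize $G_2$ by keeping all of $G$'s productions, zeroing the final weight of $q_n$, introducing a fresh final state $f'$ with $F_2(f') = F(q_n)$, and adding, for each level $i \in \{0, \ldots, n\}$ and each production $p' \neq p_i$ targeting $q_i$, a single composite production that mimics the partial spine $p_n, \ldots, p_{i+1}$ followed by $p'$ (with all constraints translated) and targets $f'$. Trimming both $G_1$ and $G_2$ via \cref{lm:trim} then yields the desired WTGh, and what remains is the bookkeeping verification that the resulting weights agree with $\sem G(t)$ exactly.
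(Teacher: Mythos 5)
Your overall decomposition coincides with the paper's: compose the productions along the spine of an accepting derivation above the \ldpp{}-production into a single production targeting a fresh, unique final state $f$ to obtain $G_1$, and realize $G_2$ by zeroing the final weight of the original final state and adding, for each spine level and each alternative production at that level, a composite production targeting a fresh final state. The genuine gap is the well-formedness of the composed production. When you substitute $\ell_{i-1}$ into $\ell_i$ at the spine position $u_i$, that position ceases to be a state-labeled leaf of the composed left-hand side. If $u_i$ is itself touched by a constraint of $E_i$ --- which is perfectly possible a priori, since $u_i$ carries a non-sink state and may be the lexicographically least member of an equivalence class of $E_i$ --- then the translated constraint in $E_{\text{f}}$ relates an \emph{internal} position of $\ell_{\text{f}}$ to a $\bot$-leaf, so $E_{\text{f}} \not\subseteq \pos_Q(\ell_{\text{f}})^2$ and $G_1$ (likewise the composite productions of $G_2$) is no longer a WTGh. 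That is precisely what the lemma asserts and what Theorem~\ref{thm: final} later exploits, where the context $C$ is built by punching holes exactly at the state-labeled positions constrained to $u$. Saying ``with all constraints translated'' does not resolve this.

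The paper closes this gap with an additional idea that your proposal is missing: among all choices of an \ldpp{}-production $p$, witnessing tree $t'$, accepting derivation $d$, and occurrence $w_i$ of $p$ in $d$, one fixes a choice with $\abs{w_i}$ minimal. If some spine position $w_{i_j}$ were constrained by the production applied one level above it, then substituting the tall tree $t$ into $t'$ at $w_iu$ and restricting to the subtree rooted at $w_{i_j}$ exhibits that production as an \ldpp{}-production applied at a strictly shorter position, contradicting minimality. Hence, for the minimal choice, no spine position occurs in $E_{\text{f}}$ and the composition is a legal WTGh production. You need this minimality argument (or an equivalent device) before the rest of your bookkeeping can go through.
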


\begin{example}
  We present an example for the decomposition in Lemma~\ref{lm:
    decomp}.  Consider the trim
  WTGh~$G = (Q \cup \{\bot\}, \Sigma, P, F, \mathord{\wt})$ with~$Q =
  \{q_0, \bar{q}, q_{\text{\upshape f}} \}$, $\Sigma=\{\alpha^{(0)}, \gamma^{(1)},
  \sigma^{(2)}, \gamma_1^{(1)}, \gamma_2^{(1)}\}$, final
  weights~$F(q_{\text{\upshape f}} ) = 1$ and~$F(q_0) = F(\bar{q}) = F(\bot) = 0$,
  and the set~$P=P_\bot\cup P'$ defined by
   $   P'=\bigl\{ \,
       \alpha
      \to_1 q_0, \;
      \gamma(q_0)
      \to_1 q_0,\;
      \sigma(q_0, \bot)
      \stackrel{1=2}{\longrightarrow}_2 \bar{q},\;
      \gamma_1(\bar{q})
      \to_2 \bar{q},\;
      \gamma_2(\bar{q})
      \to_2 \bar{q},\;
      \sigma(\bar{q}, q_0)
      \to_2 q_{\text{\upshape f}} \,\bigr\}$
 and the usual productions targeting~$\bot$
  in~$P_\bot$.  Trees of the
  form~$\gamma(\dotsm(\gamma(\alpha))\dotsm)$ of arbitrary height are
  subject to the constraint~$1=2$, so $G$~satisfies the \ldpp{}.
    
  We consider~$t'$ as in Figure~\ref{fig: ex
    decomp} and use its (unique) derivation in~$G$. 
    Following the approach sketched above,
  we choose a new state~$f$ and define~$G_1 = (Q \cup \{f\} \cup
  \{\bot\}, \Sigma, F_1, P_1, \mathord{\wt_1})$, where~$F_1(f) = 1$
  and~$F_1(q) = 0$ for every~$q \in Q \cup \{\bot\}$,
  and~$P_1 = P\cup \{p_{\text f}\}$ with the new production~$p_{\text
    f}$ depicted in Figure~\ref{fig: ex decomp}, which joins all the
  productions of~$G$ used to derive~$t'$, from the one evoking the
  \ldpp{} to the one targeting a final state.
  \begin{figure}
    \begin{center}
      \caption{The tree~$t'$ and the new production~$p_f$}
      \label{fig: ex decomp}
      \begin{tikzpicture}
        \node at (-1.5,0) (label) {$t' =$};
        \node at (0, -1.5)  (a) {
          \begin{forest}
            for tree={%
              l sep=0.1cm,
              s sep=0.6cm,
              minimum height=0.000008cm,
              minimum width=0.000015cm,
            }
            [$\sigma$[$\gamma_1$[$\sigma$[$\alpha$
            ][$\alpha$]
            ]
            ][$\alpha$]	
            ]
          \end{forest}
        };
        \node at (5,0) (label) {$p_{\text f} = $};
        \node at (6.5, -1.5)  (a) {
          \begin{forest}
            for tree={%
              l sep=0.1cm,
              s sep=0.6cm,
              minimum height=0.000008cm,
              minimum width=0.000015cm,
            }
            [$\sigma$[$\gamma_1$[$\sigma$[$q_0$
            ][$\bot$]
            ]
            ][$q_0$]	
            ]
          \end{forest}
        };
        \node at (8.5,0) (label){$\stackrel{111=112}{\longrightarrow}_8 
          \quad f$};
      \end{tikzpicture}
    \end{center}
  \end{figure}
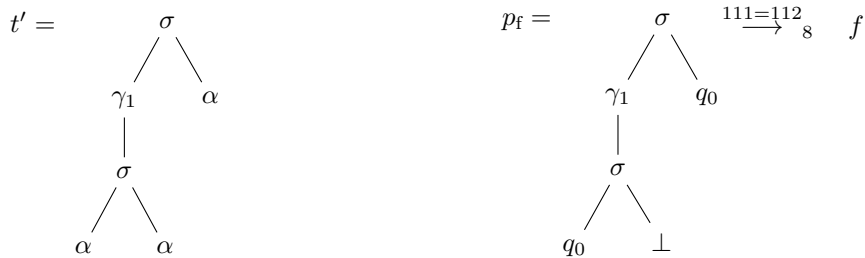
  It remains to construct a WTGh~$G_2$ such that~$\sem G= \sem{G_1} +
  \sem{G_2}$. All productions of~$G$ still occur
  in~$G_2$, but~$q_{\text{\upshape f}} $ is not final anymore.  Instead, we
  add a state~$f$ with~$F_2(f) = F(q_{\text{\upshape f}} ) = 1$ and make sure that
  this state adopts all other accepting derivations that formerly led
  to~$q_{\text{\upshape f}} $. 
  For this, we handle first the derivations
  that coincide with the derivation for~$t'$ at the juncture
  positions~$\varepsilon$ and~$1$, but not at~$2$. This leads to the
  following new productions~$p_1^1$ and~$p_2^1$:
  \begin{center}
    \begin{tikzpicture}
      \node at (-1.5,0) (label) {$p^1_1 =$};
      \node at (0, -1.5)  (a) {
        \begin{forest}
          for tree={%
            l sep=0.1cm,
            s sep=0.6cm,
            minimum height=0.000008cm,
            minimum width=0.000015cm,
          }
          [$\sigma$[$\gamma_1$[$\gamma_1$[$\bar{q}$
          ]
          ]
          ][$q_0$]	
          ]
        \end{forest}
      };
      \node at (1.5,0) (label){$\to_8 \; f$};
      \node at (5.5,0) (label) {$p^1_2 =$};
      \node at (7, -1.5)  (a) {
        \begin{forest}
          for tree={%
            l sep=0.1cm,
            s sep=0.6cm,
            minimum height=0.000008cm,
            minimum width=0.000015cm,
          }
          [$\sigma$[$\gamma_1$[$\gamma_2$[$\bar{q}$
          ]
          ]
          ][$q_0$]	
          ]
        \end{forest}
      };
      \node at (8.5,0) (label){$\to_8 \; f\enspace.$};
    \end{tikzpicture}
  \end{center}
  Next  we cover the derivations that differ from the
  derivation for~$t'$ at the position~$1$ but coincide with it at the
  root.  This leads to the new productions
  \begin{center}
    \begin{tikzpicture}
      \node at (-1.5,0) (label) {$p^2_1 =$};
      \node at (0, -1)  (a) {
        \begin{forest}
          for tree={%
            l sep=0.1cm,
            s sep=0.6cm,
            minimum height=0.000008cm,
            minimum width=0.000015cm,
          }
          [$\sigma$[$\sigma$[$q_0$][$\bot$]
          ][$q_0$]
          ]
        \end{forest}
      };
      \node at (2,0) (label){$\stackrel{11=12}{\longrightarrow}_4 
        \quad f$};
      \node at (5.5,0) (label) {$p^2_2 =$};
      \node at (7, -1)  (a) {
        \begin{forest}
          for tree={%
            l sep=0.1cm,
            s sep=0.6cm,
            minimum height=0.000008cm,
            minimum width=0.000015cm,
          }
          [$\sigma$[$\gamma_2$[$q_0$
          ]
          ][$q_0$]	
          ]
        \end{forest}
      };
      \node at (8.5,0) (label){$\to_4 \; f\enspace.$};
    \end{tikzpicture}
  \end{center}
  Apart from the production incorporated at the root of~$p_{\text f}$,
  no other production of~$G$ targets~$q_{\text{\upshape f}} $ directly, so 
  no more productions are added to~$P_2$.
 
  Finally, we define the WTGh~$G_2 = (Q \cup \{f\} \cup \{\bot\}, \Sigma, 
  F_2, P_2, \mathord{\wt_2})$ with~$F_2(f) = F(q_{\text{\upshape f}} )=1$,
  $F_2(q_{\text{\upshape f}} ) = F_2(q_0) = F_2(\bar{q}) = F_2(\bot) = 0$,
  and~$P_2 = P \cup\{p^1_1,p^1_2\} \cup \{p^2_1, p^2_2\}$.
\end{example}

It remains to show that the existence of a decomposition~$\sem G =
\sem{G_1} + \sem{G_2}$ as in Lemma~\ref{lm: decomp} implies the
nonregularity of~$\sem G$.  For this, we employ the following idea.
Consider a ranked alphabet~$\Sigma$ containing a letter~$\sigma$ of
rank~$2$, a WTA~$G' = (Q, \Sigma, F, P, \mathord{\wt})$ over~$\Sigma$
(which exemplifies~$G_2$), and a sequence~$t_0, t_1, t_2, \dotsc \in
T_\Sigma$ of pairwise distinct trees.  At this point, we assume
that~$P$ contains all possible productions, but we may have~$\wt(p) =
0$ for~$p \in P$.  Using the initial algebra
semantics~\cite{fulvog09}, we can find a matrix representation for the
weights assigned by~$G'$ to trees of the form~$\sigma(t_i, t_j)$ as 
follows.  We enumerate the states~$Q = \{\seq q1n\}$ and for every~$i
\in \N$ define a (column) vector~$\nu_i \in \N^n$ by~$(\nu_i)_k =
\wt_{G'}^{q_k}(t_i)$ for~$k \in [n]$.  Furthermore, we define a matrix~$N
\in \N^{n \times n}$ by~$N_{kh} = \sum_{q \in Q} F(q) \cdot 
\wt(\sigma(q_k, q_h) \to q)$ for $k,h \in [n]$.
Then~$\sem {G'}(\sigma(t_i, t_j)) = \nu_i^{\text T} N \nu_j$ for all~$i,
j \in \N$, where $\nu_i^{\text T}$~is the transpose of~$\nu_i$.

We employ this matrix representation to show that the sum of~$\sem {G'}$
and the (nonregular) characteristic function~$1_L$ of the tree
language~$L = \{\sigma(t_i, t_i) \mid i \in \N\}$ is not regular.  We
proceed by contradiction and assume that~$\sem {G'} + 1_L$ is regular.
Thus we can find an analogous matrix representation using a matrix~$N'$
and vectors~$\nu'_i$ for~$\sem {G'} + 1_L$.  Since the trees~$t_0, t_1,
t_2, \dotsc$ are pairwise distinct, we can write
\[ \bigl(\sem {G'} + 1_L \bigr) \bigl(\sigma(t_i, t_j) \bigr)
  = (\nu'_i)^{\text T} N' \nu'_j = \sem {G'}\bigl(\sigma(t_i, t_j) \bigr)
  + \delta_{ij} = \nu_i^{\text T} N \nu_j + \delta_{ij} \]
for all~$i,j \in \N$, where $\delta_{ij}$~denotes the
\textsc{Kronecker} delta.  The vectors $\nu'_i$~and~$\nu_i$ contain
nonnegative integers, so we may consider the concatenated
vectors~$\langle \nu'_i, \nu_i\rangle$ as vectors of~$\mathbb{Q}^m$
where~$m \in \N$ is the sum of number of states of~$G'$ and of the 
WTA we assumed recognizes~$\sem{G'}+1_L$.  Since~$\mathbb{Q}^m$ 
is a finite dimensional
$\mathbb{Q}$"~vector space, the $\mathbb{Q}$"~vector space spanned by 
the family~$(\langle \nu'_i, \nu_i\rangle)_{i \in \N}$ is also finite
dimensional.  We may thus select a finite generating set
from~$(\langle \nu'_i, \nu_i \rangle)_{i \in \N}$.  For simplicity, we
assume that~$\langle \nu'_1, \nu_1\rangle, \dotsc, \langle \nu'_K,
\nu_K\rangle$ form such a generating set.  Thus there exist~$\seq a1K
\in \mathbb{Q}$ with~$\langle \nu'_{K+1}, \nu_{K+1} \rangle = \sum_{i
  \in [K]} a_i \langle \nu'_i, \nu_i \rangle$.  Applying the usual
distributivity laws for matrix multiplication, we reach a
contradiction as follows.
\begin{linenomath*}
  \begin{align*}
    \bigl(\sem {G'} + 1_L \bigr) \bigl(\sigma(t_{K+1}, t_{K+1}) \bigr)
    &=\, (\nu'_{K+1})^{\text T} N' \nu'_{K+1} \;\!= \sum_{i \in [K]} a_i
      (\nu'_i)^{\text T} N' \nu'_{K+1} \\*
    &= \sum_{i \in [K]} a_i \nu_i^{\text
      T} N \nu_{K+1} = \nu_{K+1}^{\text T} N \nu_{K+1} = \sem {G'}
      \bigl(\sigma(t_{K+1}, t_{K+1}) \bigr)
  \end{align*}
\end{linenomath*}

For the general case, we do not want to assume that $\sem{G_2}$~is
regular, so we cannot assume to have a matrix representation as we had
for~$\sem {G'}$ above.  In order to make our idea work, we identify a set
of trees for which the behavior of~$\sem{G_1} + \sem{G_2}$ resembles
that of~$\sem {G'} +1_L$; more precisely, we construct a context~$C$ and
a sequence~$t_0, t_1, t_2, \dotsc$ of pairwise distinct trees such
that~$(\sem{G_1} + \sem{G_2})(C(t_i, t_j)) = \nu_i^{(1)} N \nu_j^{(2)}
+ \delta_{ij}\mu_i$ for all~$i,j \in \N$ and
additionally,~$\mu_i > 0$ for all~$i \in \N$.
Unfortunately, working with a $2$"~context~$C$ may be insufficient if
$G_1$~uses constraints of the form~$\{v = v', v' = v''\}$, where more
than two positions are constrained to be pairwise equivalent.
Therefore, we have to consider more general $n$"~contexts~$C$ and then
identify a sequence of trees such that the equation above is satisfied
on~$C(t_i, t_j, t_j, \dotsc, t_j)$.

We are now ready for the final theorem. For this, we will use the following version of
\textsc{Ramsey}'s theorem~\cite{ramsey1930}.  For a set~$X$, we denote
by~$\binom X2$ the set of all subsets of~$X$ of size~$2$.

\begin{theorem}
  Let~$k \geq 1$ be an integer and~$f \colon {\binom{\N}2} \to [k]$
  a mapping.  There exists an infinite subset~$E
  \subseteq \N$ such that~$f|_{\binom E2}\equiv i$
  for some~$i\in [k]$.
\end{theorem}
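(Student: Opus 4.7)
The plan is to prove this version of \textsc{Ramsey}'s theorem by constructing, recursively on $i \in \N$, an infinite strictly increasing sequence $n_0 < n_1 < n_2 < \dotsb$ of natural numbers together with a sequence of colors $c_0, c_1, c_2, \dotsc \in [k]$ such that $f(\{n_i, n_j\}) = c_i$ holds for every pair $i < j$. Once such data is at hand, a second pigeonhole application to the infinite color sequence $(c_i)_{i \in \N}$ yields a color $c \in [k]$ attained infinitely often, say at indices $i_0 < i_1 < i_2 < \dotsb$; then $E = \{n_{i_r} \mid r \in \N\}$ is infinite and satisfies $f|_{\binom{E}{2}} \equiv c$, as required.

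For the recursive construction I would maintain an auxiliary infinite set $A_i \subseteq \N$ at each stage, starting with $A_0 = \N$. At stage $i$, let $n_i = \min A_i$ and partition the infinite set $A_i \setminus \{n_i\}$ according to the color $f(\{n_i, x\}) \in [k]$ assigned to each~$x$. Since $[k]$ is finite, at least one of these $k$ color classes must itself be infinite; pick such a class, call its color $c_i$ and the class $A_{i+1}$. Then $n_{i+1} = \min A_{i+1}$ is defined and satisfies $n_{i+1} > n_i$, and for every $j > i$ one has $n_j \in A_{i+1}$, whence $f(\{n_i, n_j\}) = c_i$ by the defining property of $A_{i+1}$.

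The whole proof thus reduces, twice over, to the elementary fact that a coloring of an infinite set by finitely many colors admits an infinite monochromatic subclass. The only bookkeeping needed is ensuring that each $A_i$ remains infinite, which is immediate from the pigeonhole step at the previous stage, so no genuine obstacle arises; in particular, no induction on the number $k$ of colors is required. The construction is straightforwardly formalizable by dependent choice on the sequence $(A_i, n_i, c_i)_{i \in \N}$.
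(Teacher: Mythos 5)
Your proof is correct and complete: the iterated pigeonhole construction of the sequence $(n_i, c_i)_{i \in \N}$ with $f(\{n_i, n_j\}) = c_i$ for $i < j$, followed by a final pigeonhole on the color sequence, is the classical argument for the infinite Ramsey theorem for pairs, and every step (the sets $A_i$ staying infinite, $n_{i+1} > n_i$, the monochromaticity of $E$) checks out. The paper does not prove this statement at all --- it is quoted as a known result with a citation to Ramsey's 1930 paper --- so there is no in-paper proof to compare against; your argument supplies the standard self-contained justification.
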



\begin{theorem}
  \label{thm: final}
  Let $G = (Q \cup \{\bot\}, \Sigma, F, P, \mathord{\wt})$ be a trim
  WTGh.  If $G$~satisfies the \ldpp{}, then $\sem G$~is not regular.
\end{theorem}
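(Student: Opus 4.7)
The plan is to invoke Lemma~\ref{lm: decomp} and work with the resulting decomposition $\sem G = \sem{G_1} + \sem{G_2}$. Writing $p_{\text{\upshape f}} = \ell_{\text{\upshape f}} \stackrel{E_{\text{\upshape f}}}{\longrightarrow} f$ for the unique final production of $G_1$ and $(u,v) \in E_{\text{\upshape f}}$ for the constraint with $\ell_{\text{\upshape f}}(u) \neq \bot = \ell_{\text{\upshape f}}(v)$, I would build an $n$-context $C \in T_\Sigma(\Box)$ by placing $\Box$ at every position of the equivalence class $[u]_{E_{\text{\upshape f}}}$ and filling every other state-labeled position $w$ of $\ell_{\text{\upshape f}}$ with a fixed tree derivable at $\ell_{\text{\upshape f}}(w)$, consistently with the remaining equivalence classes of $E_{\text{\upshape f}}$. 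Plugging in the infinite family $t_0, t_1, t_2, \dotsc$ of pairwise distinct trees supplied by Lemma~\ref{lm: decomp} yields trees $C(t_i, t_j, \dotsc, t_j)$ on which $\sem{G_1}$ is supported exactly at $i = j$, since the constraint forces every hole to carry the same subtree; this gives $\sem{G_1}(C(t_i, t_j, \dotsc, t_j)) = \delta_{ij}\mu_i$ with $\mu_i > 0$.

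The decisive step is to show that $\sem{G_2}$, restricted to this family (possibly after passing to an infinite subsequence), admits a bilinear representation $\sem{G_2}(C(t_i, t_j, \dotsc, t_j)) = \nu_i^{\text T} N \nu_j$ for some vectors $\nu_i \in \N^m$ and matrix $N \in \N^{m \times m}$, even though $G_2$ need not be a WTA. Each accepting derivation of $G_2$ on $C(t_i, t_j, \dotsc, t_j)$ can be tagged by a finite \emph{type} that records which production of $G_2$ is used at each state position of $C$, together with the pattern of equality constraints that either cross between the $t_i$-hole and the $t_j$-holes or are local to a single hole. Colouring each pair $\{i,j\} \in \binom{\N}{2}$ by the set of types realised on $C(t_i, t_j, \dotsc, t_j)$ and applying \textsc{Ramsey}'s theorem, I would extract an infinite monochromatic subsequence of indices along which exactly one type is available. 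Exploiting that the $t_k$ may be taken arbitrarily tall via Lemma~\ref{lm:existence of subs}, I expect to rule out non-trivial cross-slot constraints on the monochromatic subsequence so that the derivations factor through independent computations on the $t_i$- and $t_j$-slots; the initial-algebra semantics then produces the announced bilinear form.

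Once the identity $\sem G(C(t_i, t_j, \dotsc, t_j)) = \nu_i^{\text T} N \nu_j + \delta_{ij} \mu_i$ is in hand on the refined subsequence, I would close the argument exactly as in the warm-up sketch preceding the theorem. Assuming for contradiction that $\sem G$ is regular, the initial-algebra semantics of a WTA for $\sem G$ yields a second bilinear representation $(\nu'_i)^{\text T} N' \nu'_j$. The concatenated vectors $\langle \nu'_i, \nu_i\rangle$ live in a finite dimensional $\mathbb{Q}$-vector space, so an initial finite segment generates them all; writing $\langle \nu'_{K+1}, \nu_{K+1}\rangle$ as a rational combination of those generators and equating the two expressions for $\sem G(C(t_{K+1}, \dotsc, t_{K+1}))$ makes the $\nu^{\text T} N \nu$-parts on both sides cancel, leaving $\mu_{K+1} = 0$ and contradicting $\mu_{K+1} > 0$.

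The principal obstacle is the \textsc{Ramsey} step. The difficulty is that the equality constraints of $G_2$ may a priori correlate the subtree at the $t_i$-slot with those at the $t_j$-slots, and the set of constraint patterns actually satisfiable on $C(t_i, t_j, \dotsc, t_j)$ depends on the particular coincidences among the $t_k$'s. Controlling this combinatorially, so that on the monochromatic subsequence the bilinear form truly has nonnegative integer entries that do not secretly depend on $i$ or $j$ through a forced equality, is where both \textsc{Ramsey}'s theorem and the height growth of $(t_k)_k$ guaranteed by Lemma~\ref{lm: decomp} must do their work.
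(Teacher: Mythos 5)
Your proposal follows essentially the same route as the paper's own proof: the decomposition from Lemma~\ref{lm: decomp}, the $n$-context with holes at the positions constrained to $u$, \textsc{Ramsey}'s theorem applied to the sets of derivation patterns of $G_2$ restricted to the context, the resulting representation $\nu_i^{(1)} N \nu_j^{(2)} + \delta_{ij}\mu_i$ with $\mu_i>0$, and the finite-dimensionality argument over $\mathbb{Q}$ to force $\mu_{K+1}=0$. The obstacle you single out is exactly the one the paper resolves (by a case analysis showing every constraint occurring in a pattern of $D_<$ is satisfied on all $C_{kh}$, using only pairwise distinctness of the $t_i$, whence $D_<=D_>\subseteq D_=$), so your outline is correct and matches the intended argument.
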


\begin{proof}
  By Lemma~\ref{lm: decomp} there exist two trim WTGh $G_1 = (Q_1 \cup
  \{\bot\}, \Sigma, F_1, P_1, \mathord{\wt_1})$ and $G_2 = (Q_2 \cup
  \{\bot\}, \Sigma, F_2, P_2, \mathord{\wt_2})$ with~$\sem G(t) =
  \sem{G_1}(t) + \sem{G_2}(t)$ for all~$t \in T_\Sigma$.
  Additionally, there exists~$f\in Q_1$ with~$F_1(f) \neq 0$ 
  and~$F_1(q) = 0$ for all~$q \in Q_1 \setminus
  \{f\}$ and there exists exactly one production~$p_{\text f} = (\ell_{\text f}
  \stackrel {E_{\text f}}\longrightarrow f) \in P_1$ whose target state
  is~$f$.  Finally, for this production~$p_{\text f}$ there
  exists~$(u, v) \in E_{\text f}$ with~$\ell_{\text f}(u) \neq \ell_{\text f}(v) = 
  \bot$ and an infinite sequence~$t_0, t_1, t_2, \dotsc \in T_\Sigma$ of pairwise
  distinct trees with~$D_{G_1}^{\ell_{\text f}(u)}(t_i) \neq \emptyset$ for all~$i
  \in \N$.
	
  Let~$t \in T_\Sigma$ be such that~$D_{G_1}^{f}(t) \neq
  \emptyset$, and let~$\seq w1r$ be an enumeration of all positions
  that are equality-constrained to~$u$ via~$E_{\text f}$, where we assume
  that~$w_1 = u$.  We define a context~$C = t[\Box]_{w_1} \dotsm 
  [\Box]_{w_r}$.  Then~$\sem{G_1}(C(t_i, t_j, t_j, \dotsc, t_j)) > 0$
  if and only if~$i = j$. 
	
 Let us establish some additional notations. Let~$k, h \in \N$ and assume there is~$q \in Q_2$
  with~$F_2(q) \neq 0$ and~$d = (p_1,
  w_1) \dotsm (p_m, w_m) \in D_{G_2}^q(C(t_k, t_h,t_h, \dotsc, t_h))$.
  Let~$p_i = \ell_i \stackrel{E_i}\longrightarrow q_i$ for every~$i
  \in \nve{m}$, and for a set~$X \subseteq \pos(C(t_k,t_h, t_h,
  \dotsc, t_h))$, we let~$i_1 < \dotsb < i_n$ be such that~$\seq
  w{i_1}{i_n}$ is an enumeration of~$\{\seq w1m\} \cap X$; i.e., all
  positions in~$X$ to which~$d$ applies productions. 
  We set~$d|_X = (p_{i_1}, w_{i_1}) \dotsm (p_{i_n}, w_{i_n})$,
  $\wt_2(d|_X) = \prod_{j \in [n]} \wt_2(p_{i_j})$, and
  $D_{kh} = \{d'|_{\pos(C)} \mid \exists q' \in Q_2 \colon F_2(q')
    \neq 0,\, d' \in D_{G_2}^{q'}(C(t_k, t_h, t_h, \dotsc, t_h)) \}$.
	
  We now employ \textsc{Ramsey}'s theorem in the following way.
  For~$k,h \in \N$ with~$k < h$, we consider the mapping~$\{k, h\}
  \mapsto D_{kh}$.  This mapping has a finite range as every~$D_{kh}$
  is a set of finite words over the alphabet~$P_2 \times \pos(C)$ of
  length at most~$\abs{\pos(C)}$.  Thus, by \textsc{Ramsey}'s theorem,
  we obtain a subsequence~$(t_{i_j})_{j \in \N}$ with~$D_{i_ki_h} =
  D_<$ for all~$k, h \in \N$ and some set~$D_<$.  For simplicity, we
  assume~$D_{kh} = D_<$ for all~$k, h \in \N$ with~$k <
  h$. Similarly, we select a further subsequence and
  assume~$D_{kh} = D_>$ for all~$k, h \in \N$ with $k > h$.
  Finally, the mapping~$k \mapsto D_{kk}$ also has a finite range, so
  by the pigeonhole principle, we may select a further subsequence and
  assume that~$D_{kk} = D_=$ for all~$k \in \N$ and some set~$D_=$.
  In the following, we show that~$D_< = D_> \subseteq D_=$.
	
  For now, we assume~$D_< \neq \emptyset$, let~$(p_1, w_1) \dotsm
  (p_m, w_m) \in D_<$, and let~$p_i = \ell_i
  \stackrel{E_i}\longrightarrow q_i$ for every~$i \in \nve{m}$.  Also,
  we define~$C_{kh} = C(t_k, t_h, t_h, \dotsc, t_h)$, $C_{k\Box} =
  C(t_k, \Box, \Box, \dotsc, \Box)$, and~$C_{\Box h} = C(\Box, t_h,
  t_h, \dotsc, t_h)$ for~$k, h \in \N$.  We show that every constraint
  from every~$E_i$ is satisfied on all~$C_{kh}$ with~$k, h \geq 1$,
  not just for~$k < h$.  More precisely, let~$i \in \nve{m}$, $(u',
  v') \in E_i$, and~$(u,v) = (w_iu', w_iv')$.  We show~$C_{kh}|_u =
  C_{kh}|_v$ for all~$k, h \geq 1$.  Note that by assumption,
  $C_{kh}|_u = C_{kh}|_v$~is true for all~$k, h \in \N$ with~$k < h$.
  We show our statement by a case distinction depending on the
  position of $u$~and~$v$ in relation to the positions~$\seq w1r$.
  \begin{enumerate}
  \item If both $u$~and~$v$ are parallel to~$w_1$, then
    $C_{ij}|_u$~and~$C_{ij}|_v$ do not depend on~$i$.  Thus,
    $C_{0j}|_u = C_{0j}|_v$ for all~$j \geq 1$ implies the statement. 
  \item If $u$~is in prefix-relation with~$w_1$ and $v$~is parallel
    to~$w_1$, then $C_{ij}|_v$ does not depend on~$i$.  If~$u \leq
    w_1$, then by our assumption that~$(t_i)_{i \in \N}$ are pairwise
    distinct, we obtain the contradiction~$C_{02}|_v = C_{02}|_u \neq
    C_{12}|_u = C_{12}|_v$, where~$C_{02}|_v = C_{12}|_v$ should
    hold.  Thus, we have~$w_1 \leq u$ and in particular,
    $C_{ij}|_u$~does not depend on~$j$.  Thus, for all~$i,j \geq 1$ we
    obtain~$C_{ij}|_u = C_{i,i+1}|_u = C_{i,i+1}|_v = C_{0,i+1}|_v =
    C_{0,i+1}|_u = C_{0j}|_u = C_{0j}|_v = C_{ij}|_v$.  If $v$~is in
    prefix-relation with~$w_1$ and $u$~is parallel to~$w_1$, then we
    come to the same conclusion by formally exchanging $u$~and~$v$ in
    this argumentation.
  \item If $u$~and~$v$ are both in prefix-relation with~$w_1$, then
    $u$~and~$v$ being parallel to each other implies $w_1 \leq
    u$~and~$w_1 \leq v$.  In particular, both $u$~and~$v$ are parallel
    to all~$\seq w2m$.  Thus, we obtain, as in the first case,
    that $C_{ij}|_u$~and~$C_{ij}|_v$ do not depend on~$j$ and the
    statement follows from~$C_{i,{i+1}}|_u = C_{i,{i+1}}|_v$ for
    all~$i \in \N$.
  \end{enumerate}
  Let~$k, h \geq 1$ and~$d_C \in D_<$, and let~$q \in Q_2$, $d_{k,k+1}
  \in D_{G_2}^q(C_{k,k+1})$, and~$d_{h-1, h} \in D_{G_2}^q(C_{h-1,h})$
  such that~$d_C = d_{k, k+1}|_{\pos(C)} = d_{h-1,h}|_{\pos(C)}$.
  Then for~$d_k = d_{k,k+1}|_{\pos(C_{k,k+1}) \setminus \pos(C_{\Box,k+1})}$ 
  and~$d_h = d_{h-1,h}|_{\pos(C_{h-1,h})\setminus\pos(C_{h-1,\Box})}$,
  we can reorder~$d = d_k d_h d_C$ to a complete left-most derivation
  of~$G_2$ for~$C_{kh}$, as all equality constraints from~$d_k$ are
  satisfied by the assumption on~$d_{k,k+1}$, all equality constraints
  from~$d_h$ are satisfied by the assumption on~$d_{h-1,h}$, and all
  equality constraints from~$d_C$ are satisfied by our case
  distinction.  Considering the special cases~$k = 2$, $h = 1$, and~$k
  = h = 1$, and the definitions of $D_>$~and~$D_=$, we obtain~$d_C \in
  D_{21} = D_>$ and~$d_C \in D_{11} = D_=$, and hence, $D_< \subseteq
  D_>$~and~$D_< \subseteq D_=$.
 
  The converse inclusion~$D_> \subseteq D_<$ follows with an analogous
  reasoning. 
  In conclusion, we obtain~$D_< = D_> \subseteq D_=$.
  By the reasoning above, the case~$D_< = \emptyset$ we excluded
  earlier is only possible if also~$D_> = \emptyset$, in which case we
  again have~$D_< = D_> \subseteq D_=$.
	
  Let~$\seq d1n$ be an enumeration of~$D_<$, $i \in \nve{n}$, and~$k
  \in \N$.  We define the sets
  \begin{linenomath*}
    \begin{align*}
      D^{(1)}_{i,k}
      &= \bigl\{ d|_{\pos(C_{k,k+1})\setminus\pos(C_{\Box,k+1})} \mid d
        \in D_{G_2}^q(C_{k,k+1}),\, d_i = d|_{\pos(C)},\, q \in Q_2
        \bigr\} \\*
      D^{(2)}_{i,k}
      &= \bigl\{ d|_{\pos(C_{k+1,k})\setminus\pos(C_{k+1,\Box})} \mid d
        \in D_{G_2}^q(C_{k+1,k}),\, d_i = d|_{\pos(C)},\, q \in Q_2 \bigr\}\
    \end{align*}
  \end{linenomath*}
  and the corresponding weights $\nu_{i,k}^{(1)} = \sum_{d \in
    D_{i,k}^{(1)}} \wt_2(d)$~and~$\nu_{i,k}^{(2)} = \sum_{d \in
    D_{i,k}^{(2)}} \wt_2(d)$.  Finally, we let~$q_i$ be the target
  state of the last production in~$d_i$ and define~$\nu_i = F_2(q_i)
  \cdot \wt_2(d_i)$.  Then for all~$k, h \in \N$ we
  have~$\sem{G_2}(C_{kh}) = \sum_{i \in [n]} (\nu_{i,k}^{(1)} \cdot
  \nu_i \cdot \nu_{i,h}^{(2)}) + \delta_{kh} \mu_k$ for nonnegative 
  integer weights~$(\mu_j)_{j \in \N}$, which stem from the fact
  that~$D_= \setminus D_< \neq \emptyset$ may hold.  We arrange the
  weights~$\nu_{i,k}^{(1)}$ into a row vector~$\nu_k^{(1)}$, and the
  weights~$\nu_{i,h}^{(2)}$ into a column vector~$\nu_h^{(2)}$, and
  the weights~$\nu_i$ into a diagonal matrix~$N$ such 
  that~$\sem{G_2}(C_{kh}) = \nu_k^{(1)} N \nu_h^{(2)} +\delta_{kh} \mu_k$.

  Recall that~$\sem{G_1}(C_{kh}) > 0$ if and only if~$k = h$ for
  all~$k, h \in \N$.  We can thus modify the weights~$\mu_k$ to
  obtain~$\sem G(C_{kh}) = \sem{G_2}(C_{kh}) + \sem{G_1}(C_{kh}) =
  \nu_k^{(1)} N \nu_h^{(2)} + \delta_{kh} \mu_k$ with~$\mu_k > 0$ for
  all~$k \in \N$.  If $\sem G$~is regular, we can assume a
  representation~$\sem G(C_{kh}) = g(\kappa_k, \kappa_h, \kappa_h, \dotsc,
    \kappa_h)$ for all~$k,h \in \N$, where $\kappa_h$ is a finite
  vector of weights over~$\N$ where each entry corresponds to the sum
  of all derivations for~$t_h$ to a specific state of a weighted tree
  automaton, and~$g$ is a multilinear map encoding the weights of the
  derivations for~$C(\Box, \Box, \dotsc, \Box)$ depending on the
  specific input states at the $\Box$"~nodes and the target state at
  the root~$\varepsilon$.  We choose~$K$ such that the concatenated
  vectors~$\langle \kappa_1, \nu_1^{(1)} \rangle, \dotsc, \langle
  \kappa_K, \nu_K^{(1)}\rangle$ form a generating set of the
  $\mathbb{Q}$"~vector space spanned by~$(\langle \kappa_i,
  \nu_i^{(1)} \rangle)_{i \in \N}$.  Then there are coefficients~$\seq
  a1K \in \mathbb{Q}$ with~$\kappa_{K+1} = \sum_{i \in [K]} a_i
  \kappa_i$ and~$\nu_{K+1}^{(1)} = \sum_{i \in [K]} a_i \nu_i^{(1)}$.
  Thus, we have
  \begin{linenomath*}
    \begin{align*}
      \nu_{K+1}^{(1)} N \nu_{K+1}^{(2)} + \mu_{K+1}
      = g(\kappa_{K+1}, \kappa_{K+1}, \dotsc,
        \kappa_{K+1}) &=\sum_{i \in [K]} a_i g(\kappa_i, \kappa_{K+1}, 
        \dotsc, \kappa_{K+1}) \\*
      &= \sum_{i \in [K]} a_i \nu_i^{(1)} N \nu_{K+1}^{(2)} =
        \nu_{K+1}^{(1)} N \nu_{K+1}^{(2)}
    \end{align*}
  \end{linenomath*}
  which implies~$\mu_{K+1} = 0$ and thus our desired contradiction.
\end{proof}

%


\bibliography{references.bib}

\begin{thebibliography}{10}

\bibitem{bozrah05}
Symeon Bozapalidis and George Rahonis.
\newblock On the closure of recognizable tree series under tree homomorphisms.
\newblock {\em J.\@ Autom.\@ Lang.\@ Comb.}, 10(2--3):185--202, 2005.

\bibitem{tataok}
H.~Comon, M.~Dauchet, R.~Gilleron, C.~L\"oding, F.~Jacquemard, D.~Lugiez,
  S.~Tison, and M.~Tommasi.
\newblock Tree automata --- {T}echniques and applications, 2007.

\bibitem{hom2012exp}
Carles Creus, Adri{\`a} Gasc{\'o}n, Guillem Godoy, and Lander Ramos.
\newblock The hom problem is exptime-complete.
\newblock In {\em 2012 27th Annual IEEE Symposium on Logic in Computer
  Science}, pages 255--264. IEEE, 2012.

\bibitem{doner1970tree}
John Doner.
\newblock Tree acceptors and some of their applications.
\newblock {\em J.\@ Comput.\@ System Sci.}, 4(5):406--451, 1970.

\bibitem{drewes2006grammatical}
Frank Drewes.
\newblock {\em Grammatical picture generation: A tree-based approach}.
\newblock Springer, 2006.

\bibitem{esikui03}
Zolt{\'a}n {\'E}sik and Werner Kuich.
\newblock Formal tree series.
\newblock {\em J.\@ Autom.\@ Lang.\@ Comb.}, 8(2):219--285, 2003.

\bibitem{fulmalvog10}
Zolt\'an F\"ul\"op, Andreas Maletti, and Heiko Vogler.
\newblock Preservation of recognizability for synchronous tree substitution
  grammars.
\newblock In {\em Proc.\@ Workshop Applications of Tree Automata in Natural
  Language Processing}, pages 1--9. ACL, 2010.

\bibitem{fulmalvog10b}
Zolt\'an F\"ul\"op, Andreas Maletti, and Heiko Vogler.
\newblock Weighted extended tree transducers.
\newblock {\em Fundam.\@ Inform.}, 111(2):163--202, 2011.

\bibitem{fulvog09}
Zolt\'an F{\"u}l{\"o}p and Heiko Vogler.
\newblock Weighted tree automata and tree transducers.
\newblock In {\em Handbook of Weighted Automata}, chapter~9, pages 313--403.
  Springer, 2009.

\bibitem{gecste15}
Ferenc G{\'e}cseg and Magnus Steinby.
\newblock Tree automata.
\newblock Technical Report 1509.06233, arXiv, 2015.

\bibitem{godoy2013hom}
Guillem Godoy and Omer Gim{\'e}nez.
\newblock The {HOM} problem is decidable.
\newblock {\em J.\@ ACM}, 60(4):1--44, 2013.

\bibitem{godoy2010hom}
Guillem Godoy, Omer Gim\'enez, Lander Ramos, and Carme \`Alvarez.
\newblock The hom problem is decidable.
\newblock In {\em Proc.\@ 42nd ACM symp.\@ Theory of Computing}, pages
  485--494. ACM, 2010.

\bibitem{gol99}
Jonathan~S. Golan.
\newblock {\em Semirings and their Applications}.
\newblock Kluwer Academic, Dordrecht, 1999.

\bibitem{hebwei98}
Udo Hebisch and Hanns~J. Weinert.
\newblock {\em Semirings --- {A}lgebraic Theory and Applications in Computer
  Science}.
\newblock World Scientific, 1998.

\bibitem{jurmar08}
Dan Jurafsky and James~H. Martin.
\newblock {\em Speech and language processing}.
\newblock Prentice Hall, 2nd edition, 2008.

\bibitem{WTAc-journal}
Andreas Maletti and Andreea-Teodora Nász.
\newblock Weighted tree automata with constraints, 2023.
\newblock URL: \url{https://arxiv.org/abs/2302.03434}, \href
  {https://doi.org/10.48550/ARXIV.2302.03434}
  {\path{doi:10.48550/ARXIV.2302.03434}}.

\bibitem{rateg1981}
J.~Mongy-Steen.
\newblock {\em Transformation de noyaux reconnaissables d'arbres. For\^ets
  {RATEG}}.
\newblock PhD thesis, Universit\'e de Lille, 1981.

\bibitem{infinitewords}
Dominique Perrin.
\newblock Recent results on automata and infinite words.
\newblock In {\em Proc.\@ 11th Int.\@ Symp.\@ Mathematical Foundations of
  Computer Science}, volume 176 of {\em {LNCS}}, pages 134--148. Springer,
  1984.

\bibitem{ramsey1930}
F.~P. Ramsey.
\newblock On a problem of formal logic.
\newblock {\em Proc. London Math. Soc}, 30, 1930.

\bibitem{salomaa2012automata}
Arto Salomaa and Matti Soittola.
\newblock {\em Automata-theoretic aspects of formal power series}.
\newblock Springer, 1978.

\bibitem{schutzenberger1961}
Marcel~Paul Sch{\"u}tzenberger.
\newblock On the definition of a family of automata.
\newblock {\em Inform.\@ and Control}, 4(2--3):245--270, 1961.

\bibitem{thatcher1965generalized}
James~W. Thatcher.
\newblock Characterizing derivation trees of context-free grammars through a
  generalization of finite automata theory.
\newblock {\em J.\@ Comput.\@ Syst.\@ Sci.}, 1(4):317--322, 1967.

\bibitem{thatcher1968generalized}
James~W. Thatcher and Jesse~B. Wright.
\newblock Generalized finite automata theory with an application to a decision
  problem of second-order logic.
\newblock {\em Math.\@ Systems Theory}, 2(1):57--81, 1968.

\bibitem{wilseihac13}
Reinhard Wilhelm, Helmut Seidl, and Sebastian Hack.
\newblock {\em Compiler Design}.
\newblock Springer, 2013.

\end{thebibliography}

\clearpage
\appendix
\section*{Appendix}

\thmhom*
\begin{proof}
  We construct a WTGc~$G'$ for~$h_{\sem G}$ in two stages.  First, we
  construct the WTGc
  \[ G'' = (Q \cup \{\bot\}, \Delta \cup \Delta \times P, F'', P'',
    \mathord{\wt''}) \]
  such that for every production~$p = \sigma(\seq q1k) \to q \in P$
  and $h(\sigma) = u = \delta(\seq u1n)$, 
  \[ p'' = \Bigl( \langle \delta,p\rangle(\seq u1n) \llbracket \seq
    q1k \rrbracket \stackrel E\longrightarrow q \Bigr) \in
    P'' \quad \text{with} \quad E = \bigcup_{i \in [k]}
    \pos_{x_i}(u)^2 \]
  where the substitution~$\langle \delta,p\rangle(\seq u1n)\llbracket
  \seq q1k \rrbracket$ replaces for every~$i \in [k]$ only the
  left-most occurrence of~$x_i$ in~$\langle \delta,p\rangle(\seq u1n)$
  by~$q_i$ and all other occurrences by~$\bot$.  Moreover~$\wt''(p'')
  = \wt(p)$.  Additionally, we let~$p''_\delta = \delta(\bot, \dotsc, 
  \bot) \to \bot \in P''$  with~$\wt''(p''_\delta) = 1$ for every~$k
  \in \nat$ and $\delta \in \Delta_k \cup \Delta_k \times P$.  No
  other productions are in~$P''$.  Finally, we let~$F''(q) = F(q)$ for
  all~$q \in Q$ and~$F''(\bot) = 0$. 
  
  We can now delete the annotation.  First we remove all productions
  to~$\bot$ that are labeled with symbols from~$\Delta \times P$.
  Second, we use a deterministic relabeling to remove
  the second components of labels of~$\Delta \times P$.  Thus, we
  overall obtain a WTGc~$G'$ such that~$\sem{G'} = h_{\sem
    G}$. Clearly, we have~$\size(G') \in \mathcal{O} \bigl(
  \size(G) \cdot \size(h) \bigr)$ and~$\he(G) \in \mathcal{O}
  \bigl(\size(h) \bigr)$.
  
  The sole purpose of the annotations is to establish a one-to-one 
  correspondence between the valid derivations of~$G$ and those
  of~$G''$, before evaluating the sums to compute~$h_{\sem G}$.  This
  simplifies the understanding of the correctness of the construction,
  but is otherwise superfluous and may be omitted for efficiency.
\end{proof}


\lmtrim*
\begin{proof}
  First, recall that we may assume~$\wt(p) \neq 0$ for every~$p \in
  P$ because~$\wt_G(d) = 0$ for every derivation~$d$ of~$G$ that
  contains a production~$p$ with~$\wt(p) = 0$.  For the proof, we employ
  a simple reachability algorithm.  For every~$n \in \N$ and~$U
  \subseteq Q$, let
  \begin{linenomath*}
    \begin{align*}
      Q_0 &= \emptyset \quad
      & Q_{n+1}
      &= Q_n \cup \bigcup_{\substack{(\ell \stackrel E\longrightarrow q)
        \in P \\ \ell \in T_\Sigma(Q_n)}} \{q\} \quad
          & \Pi_U
      &= \bigcup_{\substack{(\ell \stackrel E\longrightarrow q)
        \in P \\ \ell \in T_\Sigma(U)}} \bigl\{ (q, q') \in U^2 \mid
      \pos_{q'}(\ell) \neq \emptyset \bigr\} \enspace.
    \end{align*}
  \end{linenomath*}
  Since $Q$~is finite, there exists~$\fixedn$ with~$Q_{\fixedn} =
  Q_{\fixedn+1}$.  Let~$Q' = Q_{\fixedn}$.  A straightforward proof
  shows that~$q \in Q'$ if and only if for some~$t \in T_\Sigma$ there
  exists~$d \in D_G^q(t)$ with~$\wt_G(d) \neq 0$.  To ensure the
  reachability of a final state, we let $\triangleleft$~be the 
  smallest reflexive and transitive relation on~$Q'$ that
  contains~$\Pi_{Q'}$.  Then~$P' = \{ \ell\stackrel
  E\longrightarrow q \in P \mid q\in Q',\, \exists q_{\text
    f} \in Q' \colon F(q_{\text f}) \neq 0,\, q_{\text f}
  \triangleleft q\}$, and the desired WTGh is simply~$G' = (Q \cup \{\bot\}, 
  \Sigma, F, P', \mathord{\wt|_{P'}})$.
\end{proof}

\lmdecomp*
\begin{proof}
  Let $p = (\ell \stackrel E\longrightarrow q) \in P$ be a production
  as in the \ldpp{}.  Since $G$~is trim, there exist a tree~$t' \in
  T_\Sigma$, a final state~$q_{\text f} \in Q$ with~$F(q_{\text f})
  \neq 0$, a derivation~$d = (p_1, w_1) \cdots (p_m, w_m) \in
  D_G^{q_{\text f}}(t')$, and~$i \in [m]$ such that~$p_i = p$.  In
  other words, there is a derivation utilizing production~$p$.  We
  let~$p_j = \ell_j \stackrel{E_j}\longrightarrow q_j$ for every~$j
  \in \nve{m}$, and let~$w_{i_1} > \dotsb > w_{i_k}$ be the sequence
  of prefixes of~$w_i$ among the positions~$\{\seq w1m\}$ in strictly
  descending order with respect to the prefix order.  In particular,
  we have~$w_{i_1} = w_i$ and~$w_{i_k} = \varepsilon$.

  For a position~$w$ and a set~$E'$ of constraints, we define~$wE' =
  \{(wu, wv) \mid (u,v) \in E'\}$.  We want to join the left-hand
  sides of the productions~$\seq p{i_1}{i_k}$ to a new
  production~$\ell_{i_k}[\ell_{i_{k-1}}]_{w_{i_{k-1}}} \dotsm
  [\ell_{i_1}]_{w_{i_1}} \stackrel{E_{\text f}}\longrightarrow
  q_{\text f}$ with~$E_{\text f} = \bigcup_{j \in [k]} w_{i_j}
  E_{i_j}$.  However, we need to ensure that $\seq w{i_1}{i_k}$ do not
  occur in~$E_{\text f}$.  Therefore, we assume that~$p$, $t'$,
  $q_{\text f}$, $d$, and~$i$ above are chosen such that $w_i$~is of
  minimal length among all possible choices.  Then we see as follows
  that $\seq w{i_1}{i_k}$~do not occur in~$E_{\text f}$.

  Let~$(u, v) \in E$ with $\ell(u) \neq \ell(v) = \bot$ and~$t \in
  T_\Sigma$ with~$\he(t) > \he(G)$ and~$D_G^{\ell(u)}(t) \neq
  \emptyset$.  Suppose there exists~$j \in \nve{k}$ such that
  $w_{i_j}$~occurs in~$E_{\text f}$.  Then there exists~$(u', v') \in
  E_{i_{j+1}}$ with~$w_{i_j} = w_{i_{j+1}}u'$.  Then the tree~$t' \sem
  t_{w_iu} |_{w_{i_{j}}}$ shows us that $p_{i_{j+1}}$~is also a
  production as in the \ldpp{}, but $\abs{w_{i_{j+1}}} < 
  \abs{w_i}$, so $w_i$~is not of minimal length.
  
  We define~$G_1 = (Q_1 \cup \{\bot\}, \Sigma, F_1, P_1,
  \mathord{\wt_1})$ as follows.  Let~$f \notin Q \cup \{\bot\}$ be a
  new state.  We set~$Q_1 = Q \cup \{f\}$, $F_1(f) = F(q_{\text f})$,
  and~$F_1(q') = 0$ for all~$q' \in Q$.  For the production~$p_{\text
    f} = (\ell_{i_k}[\ell_{i_{k-1}}]_{w_{i_{k-1}}} \dotsm
  [\ell_{i_1}]_{w_{i_1}} \stackrel{E_{\text f}}\longrightarrow f)$
  with~$E_{\text f} = \bigcup_{j \in [k]} w_{i_j}E_{i_j}$, we
  let~$P_1 = P \cup \{p_{\text f} \}$, $\wt_1(p_{\text f}) =
  \prod_{j \in [k]} \wt(p_{i_j})$, and~$\wt_1(p') = \wt(p')$ for
  all~$p' \in P$.  Then $G_1$~simulates all derivations of~$G$ with
  productions~$\seq p{i_1}{i_k}$ at the positions~$\seq
  w{i_1}{i_k}$, respectively.  For the existence of the infinite
  sequence of trees, let~$(u,v) \in E$ with $\ell(u) \neq \ell(v) =
  \bot$ and~$t \in T_\Sigma$ with~$\he(t) > \he(G)$
  and~$D_G^{\ell(u)}(t) \neq \emptyset$.  By Lemma~\ref{lm:existence
    of subs}, there exists an infinite sequence~$t_0, t_1, t_2, \dotsc
  \in T_\Sigma$ of pairwise distinct trees with~$D_G^{\ell(u)}(t_i)
  \neq \emptyset$ for all~$i \in \N$. Since~$D_G^{\ell(u)}(t_i)
  \subseteq D_{G_1}^{\ell(u)}(t_i)$ for all~$i \in \N$, this is the
  desired sequence.  We conclude the definition of~$G_1$ by noting
  that~$(w_iu, w_iv) \in E_{\text f}$ and that the left-hand 
  side~$\ell_{\text f}$ of~$p_{\text f}$ satisfies~$\ell_{\text
    f}(w_iu) = \ell(u)$.
  
  Next, we construct~$G_2$ such that it simulates all remaining
  derivations of~$G$ in the following sense.  If $d$~is a derivation
  of~$G$ to a state different from~$q_{\text f}$, then it is a
  derivation of~$G_2$.  If $d$~is a derivation of~$G$ to~$q_{\text f}$
  but its last production is not~$p_{i_k}$, then it is simulated
  by a derivation of~$G_2$ to a new state~$f$.  If $d$~is a derivation
  of~$G$ and its last production is~$p_{i_k}$ but the production
  at~$w_{i_{k-1}}$ is not~$p_{i_{k-1}}$, then it again is simulated by
  a derivation of~$G_2$ to~$q_{\text f}$, and so on.  To have a more
  compact definition for~$G_2$, we use the symbol~$\Box$ to denote a
  tree of height~$0$ and a term~$\Box[\ell_{i_k}]_{w_{i_k}} 
  \dotsm [\ell_{i_{j+1}}]_{w_{i_{j+1}}} [\ell']_{w_{i_{j}}}$ 
  for~$j = k$ is to be read as~$\Box[\ell']_{w_{i_{j}}}$.  We let~$f
  \notin Q \cup \{\bot\}$ be a new state and define~$G_2 = (Q_2 \cup
  \{\bot\}, \Sigma, F_2, P_2, \mathord{\wt_2})$ by~$Q_2 = Q \cup
  \{f\}$, $F_2(q_{\text f}) = 0$, $F_2(f) = F(q_{\text f})$,
  and~$F_2(q') = F(q')$ for all~$q' \in Q \setminus\{q_{\text f}\}$.
  Moreover, we let
  \begin{linenomath*}
    \[ P_2 = P \cup \bigcup_{j \in [k]} \Bigl\{
      \Box[\ell_{i_k}]_{w_{i_k}} \dotsm
      [\ell_{i_{j+1}}]_{w_{i_{j+1}}}[\ell']_{w_{i_{j}}}
      \stackrel{E_f}\longrightarrow f \;\Big|\;
      \begin{aligned}[t]
        &p' = (\ell' \stackrel{E'}\longrightarrow q_{i_{j}}) \in P
        \setminus \{p_{i_{j}}\}, \\*
        & E_f = w_{i_{j}} E' \cup \bigcup_{j' = j+1}^k
        w_{i_{j'}}E_{i_{j'}} \Bigr\} \enspace.
      \end{aligned}
    \]
  \end{linenomath*}
  For a production~$p_f = \Box[\ell_{i_k}]_{w_{i_k}} \dotsm
  [\ell_{i_{j+1}}]_{w_{i_{j+1}}}[\ell']_{w_{i_{j}}}
  \stackrel{E_f}\longrightarrow f$ constructed from~$p'$ as above we 
  let~$\wt_2(p_f) = \wt(p') \cdot \prod_{j' = j+1}^k \wt(p_{i_{j'}})$
  and for every~$p' \in P$ we let~$\wt_2(p') = \wt(p')$.  Then we have
  $\sem G(t) = \sem{G_1}(t) + \sem{G_2}(t)$ for every~$t \in
  T_\Sigma$.  Note that trimming $G_1$~and~$G_2$ will not remove any
  of the newly added productions under the assumption that $G$~is
  trim.
\end{proof}

\end{document}